\newtheorem{theorem}{Theorem}
\newtheorem{definition}{Definition}
\newtheorem{lemma}{Lemma}
\newenvironment{proof}{\textbf{Proof }}{\begin{flushright} $\square$ \end{flushright}}
\title{How to Compute Times of Random Walks based Distributed Algorithms}
\author{Alain BUI \texttt{alain.bui@univ-reims.fr} \and Devan SOHIER \texttt{devan.sohier@univ-reims.fr}\\CRESTIC-SYSCOM\\D\'epartement de Math\'ematiques et Informatique \\Universit\'e de Reims Champagne Ardenne,\\ BP1039 F-51687 Reims cedex, France,}
\date{}
\begin{document}

\DeclareGraphicsExtensions{.eps}

\maketitle

\begin{abstract}
Random walk based distributed algorithms make use of a token that circulates in the system according to a random walk scheme to achieve their goal. To study their efficiency and compare it to one of the deterministic solutions, one is led to compute certain quantities, namely the hitting times and the cover time. Until now, only bounds on these quantities were defined.

First, this paper presents two generalizations of the notions of hitting and cover times to weighted graphs. Indeed, the properties of random walks on symmetrically weighted graphs provide interesting results on random walk based distributed algorithms, such as local load balancing. Both of these generalization are proposed to precisely represent the behaviour of these algorithms, and to take into account what the weights represent.

Then, we propose an algorithm to compute the $n^2$ hitting times on a weighted graph of $n$ vertices, which we improve to obtain a $O(n^3)$ complexity. This complexity is the lowest up to now. This algorithm computes both of the generalizations that we propose for the hitting times on a weighted graph.

Finally, we provide the first algorithm to compute the cover time (in both senses) of a graph. We improve it to achieve a complexity of $O(n^32^n)$. The algorithms that we present are all robust to a topological change in a limited number of edges. This property allows us to use them on dynamic graphs.
\end{abstract}

\section{Introduction}

The constant evolution of networking 
makes  it possible today to use several computers at a time to carry out a given computation. A distributed system is defined as a set of interconnected computing devices  called \emph{sites} or \emph{nodes},  cooperating in order to achieve a computation.
  A distributed system is usually modeled by a finite undirected graph $G(V,E)$, where $V$ is the set of sites  and $E$ is the set of communication links
(be either physical or logical).

This paper focuses on \emph{random walk based distributed algorithms}. These algorithms are \emph{token-based} algorithms - the token circulation 
 mechanism 
 is a well-known paradigm to achieve a global task in distributed computing. These algorithms have been designed to remove the strong hypotheses on the topology required by a deterministic token circulation scheme.
The token message circulates in the system, and at each step, the site that owns the token sends it to one of its neighbors chosen 
 at random.

The low message complexity makes token-based algorithms interesting, in comparison with the  flooding algorithms, the main interest  is their low time complexity. 
On many particular topologies, a deterministic token circulation scheme can be designed to efficiently visit  all the sites in the network: on a ring, the token can turn clockwise; on a chain, it can turn back and forth; on a tree, a depth first search provides positive results; and on a complete graph, the token can visit the sites according any strict order.
However, these schemes suffer of a lack of adptability  because they are designed for one particular topology and cannot be easily adapted to fit other ones. On the other hand, random walk based distributed algorithms can function on any topology, they require only a local knowledge of the topology (except for the standard assumption that the network remains connected).
Random walks offer an interesting property to adapt to the insertion or deletion of sites or links  in the network without modifying any of the code (as long as that the network remains connected; otherwise, no communication is possible between the connected components and the only solution is to launch one algorithm in each component). With the increasing dynamicity of networks, this feature is becoming crucial: redesigning a new browsing scheme at each modification of the topology is impossible, and flooding-based
procedures lead to the congestion of many networks.

The token circulation paradigm has been widely studied in the deterministic case.
Original solutions using random walks have been designed to solve various problems related to distributed computing \emph{e.g}  \cite{IsJa90} for self-stabilizing mutual exclusion, \cite{BDDN01} for mobile agent in wireless networks, \cite{BeBF04a} for token circulation in a dynamic and faulty environment or as an alternative   to flooding in decentralized and unstructured peer-to-peer networks (especially to achieve low bandwidth consumption 
  by control messages) \cite{LCCL+02}.

The time complexity of random walk based algorithms, like the one of deterministic token based algorithms, is the number of ``steps'' the algorithm takes to achieve the network traversal. Considering only one walk at a time (which is the case we are dealing with),
it is also equal to  the message complexity.

Random walk-based
 distributed algorithms must be then analyzed through probabilistic tools. It can be shown that a random walk will visit all the sites in a graph in a finite time, but there is no hope to give hard bounds to the time it will take: the classical worst-case analysis cannot be applied in this case, so we are led to use an average-case analysis.

The cover time $C$:  the average time to visit all nodes in the system, and the hitting time $h_{ij}$,  the average time it takes to reach a node $j$ for the first time starting from a given node $i$, are the two first important values that arise in the analysis of random walk-based distributed algorithms.
For instance, the cover time is the average time required to build a spanning tree thanks to the random walk token circulation algorithm \cite{Aldo90} and the hitting time is the average time it takes to  enter a critical section  \cite{IsJa90}. 

\paragraph{Related Works}

The mathematical background can be found in \cite{Lova93}. 

Many bounds on hitting times and cover times are available. \cite{BrWi90} Êproves that: $$h_{ij}\leq\frac4{27}n^3-\frac19n^2+ O(n)$$
\cite{Feig95} and \cite{Feig95a} show that: $$\left(1+o(1)\right)n\ln n\leq C\leq \left(\frac{4}{27}+o(1)\right)n^3$$
In \cite{Feig96}, the authors provide a polylogarithmic approximation for the cover time with a polynomial complexity. But the approximation can lead to severe biases, because they obtain a result of $(n-1)^2\ln n$ in place of $(n-1)^2$ on the path.
\cite{KKLV00} establishes that: $$\frac12M\leq C\leq 10^5M(\ln\ln n)^2$$ with $M=\max\{\kappa_S\ln|S|/S\subset V\}$, $\kappa_S=\max\{\kappa_{ij}/i,j\in S\}$ where $\kappa_{ij} = h_{ij} + h_{ji}$ is the commute time. The $\kappa_S$ provides a polynomial approximation within a factor 2 of $M$ (a naive computation of $M$ would have an exponential complexity since $|\mathcal P(V)|=2^{|V|}$). Even if this is a theoretically good approximation ($O((\ln\ln n)^2)$ means a slow
divergence), the actual ratio is $8\times 10^5\times (\ln\ln n)^2$, and the factor $8\times 10^5$ is very high with respect to $(\ln\ln n)^2$ in most concrete applications, especially in distributed computing.

\paragraph{Contributions}
All of these results, but the last one, do not take into account the topology of the graph, except for its size. Now, the topology explains the difference between the behavior of the walk on the various topologies, the \emph{``good''} cover time on the complete graph, and the \emph{``bad''} one on the lollipop graph (\emph{wrt} their sizes). In order to design insertion schemes and topologies in which the use of random walks is efficient, we provide  algorithms to compute the exact values of hitting and cover times in a graph. 
First, we extend the notion of  these two relevant values into a weighted graph, which illustrates a more general representation of distributed systems. The weight represents the quantity that hierarchizes the neighbors, and makes the token visit a neighbor rather than one another. It can be used to represent the bandwidth of a link, which will locally balance the load on the links.  Weights also appear where studying the average time it takes to reach a set of sites. For example, in file-sharing protocols, resources are generally replicated on several sites. To obtain the average time to first hit a site in the set $\mathcal O$ of all owners of the resource, we  consider the graph built from $G$ by removing $\mathcal O$ and adding a single site $o$. The weight of the link between $o$ and a site $i$ is the sum of the weights of all links between $i$ and a site of $\mathcal O$. Then, the hitting time from $i$ to $o$ is the quantity we were searching for.

 Then we propose an original algorithm to efficiently compute the hitting times. This algorithm provides a tool to compare the efficiency of random walk based distributed algorithms on large distributed networks. 
Finally, we propose a method to exactly compute the cover time. As far as we know, this is the first solution ever designed to solve this problem. Our method provides information better than the previously known  bounds presented above. Indeed, the result not only  takes topological informations  into account but it is also fairly robust (insertion or deletion of a limited number of communications links do not alter the cover time meaningfully).

\paragraph{Outline of the paper}
The first part of this article illustrates previous results on hitting and cover times. The second part presents a new efficient method to compute the hitting times between all pair of nodes in a graph with one matrix inversion. It requires some preliminary work to generalize previous results on hitting times. The third part presents the first algorithm to exactly compute the cover time of a graph, based on the hitting times computation presented before. Then, we conclude by offering some new perspectives.
In the sequel, we recall some results demonstrated by Chandra, Raghavan \emph{et al.}, and Tetali. We then derive some more general results, that we exploit to find an efficient algorithm to compute the hitting times in a graph. Finally, we offer an exemple of the execution of this algorithm.

\section{Preliminaries}

Random walks have been the subject of a wide applied mathematics litterature. Random walks are Markov chains, \emph{i.e.} memoryless stochastic process: if $(X_n)_{n\in\mathbb N}$ is a Markov chain ($X_n$ is the site that owns the token at time $n$) , $\forall n\in \mathbb N$:
$$P[X_{n+1}=\alpha|X_n=\alpha_n, X_{n-1}=\alpha_{n-1}, \ldots, X_0=\alpha_0]=P[X_{n+1}=\alpha|X_n=\alpha_n]$$

In this paper, 
the distributed system
topology is represented by a dynamic, connected, undirected, positively  real-weighted graph.  We denoted by ${\cal N}_i$ the set of neighbors of node $i$ (the set of nodes to which $i$ is connected).

The weight function will be denoted by $\omega$. For each edge $(i,j)$ a numerical value $\omega(i,j)$ is defined. Here the weight assignment is symmetric {\em i.e.} $\omega(i,j)=\omega(j,i)$. Consider the site $i$, we denote,  $\omega(i)=\sum_{j\in {\cal N}_i}\omega(i,j)$ and $\omega(G)=\sum_{\{i,j\}\in E}\omega(i,j)$. The graph is defined to be  unweighted if no weight assignment is assumed.

A random walk is then a sequence of nodes of $G$ visited by a token that starts at a node $i$ and visits other vertices according to the following transition rule: if a token is at $i$ at time $t$ then, at time $t+1$, it will be at one of the neighbors of $i$ chosen at random among all of them proportionally to the weight of the link adjacent to $i$.

A random walk on a weighted graph is such that, being on $i$, it moves from $i$ to $j$  at the next step with the probability:
$$P[X_{n+1}=j|X_n=i]=\frac{\omega(i, j)}{\omega(i)}$$
If the graph is unweighted, $\omega(i, j)=1$ if $i$ and $j$ are neighbours, $0$ else. The probability that the walk will eventually hit a given vertex is 1: starting on any site, the token will eventually hit a given site, even if it takes a long time (actually, no hard bound exists on this time).

According to the interpretation given to the weights on the links of the graph, the characteristic values presented above can each be given two definitions.

\begin{definition}
We call hitting time $h_{ij}$ \emph{in the first sense} (respectively cover time\emph{ in the first sense}) the average number of edges visited by the random walk starting at site $i$ to visit for the first time a site $j$ (respectively to cover the graph starting at site $i$).
\end{definition}

\begin{definition}
We call hitting time $h_{ij}$ in \emph{the second sense} (respectively cover time \emph{in the second sense}) the average total weight of edges visited by the random walk starting at site $i$ to visit for the first time a site $j$ (respectively to cover the graph starting at site $i$).
\end{definition}

The commute time is denoted by $\kappa_{ij} = h_{ij}+h_{ji}$.

Most of the previous results deal with the definitions in the first sense. In the next section, we extend the results to the second sense definition.

\paragraph{Random walks and resistances}\label{RW->R}

A tight link exists  between random walks and electrical networks  \cite{DoSn00}. 
We build an electrical network from a graph $G$ by replacing each of its edges by a resistor.  The conductance value (\emph{i.e.} the inverse of the resistance)  is equal to the weight of the edge in the graph it replaces (see figure\ref{figqque}).
\begin{figure}
\begin{center}
\includegraphics[width=.4\linewidth]{./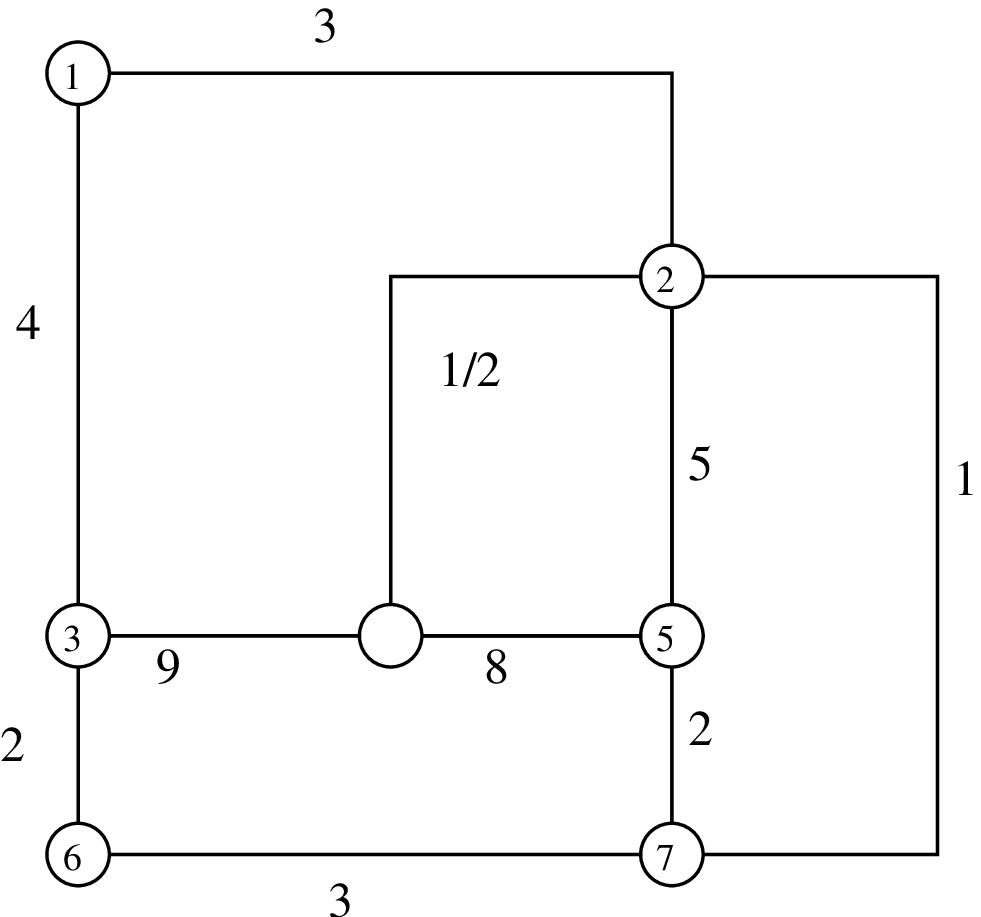}
\includegraphics[width=.4\linewidth]{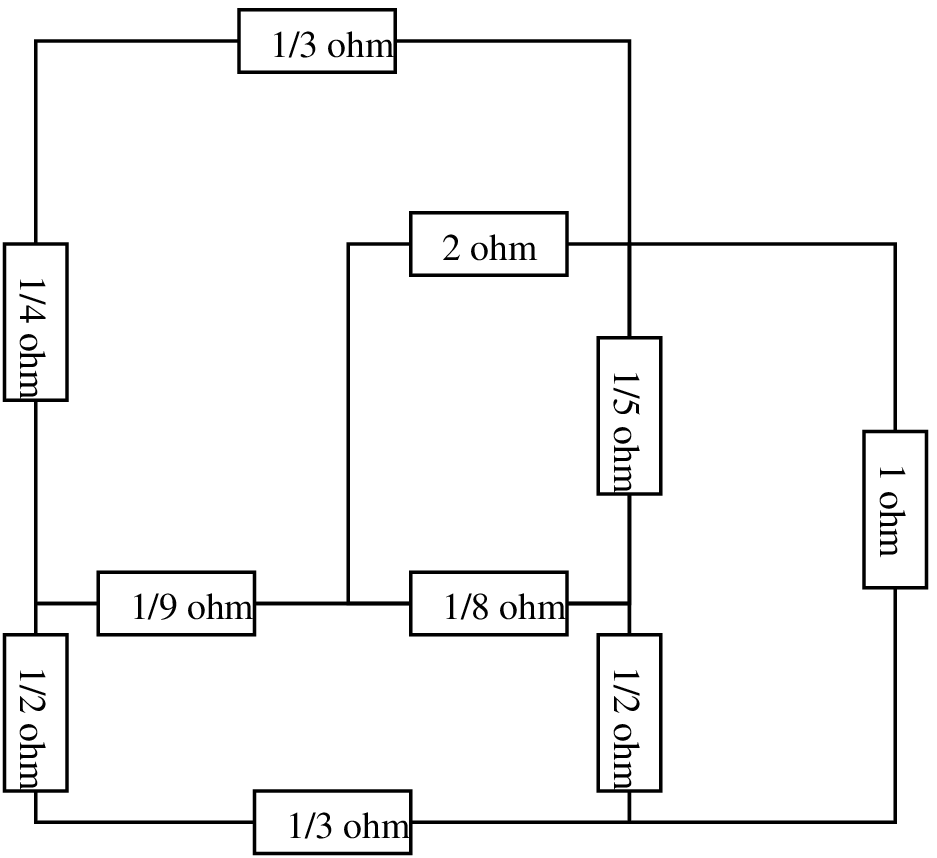}
\caption{the electrical circuit built from a weighted graph}
\label{figqque}
\end{center}
\end{figure}

Let $r({i,j})$ (resp. $c({i,j})$) denote the resistance (resp. conductance) of the resistor between two adjacent nodes $i$ and $j$. The equivalent resistance  $R_{ij}$  (resistance of the electrical network) between  $i$ and $j$ is defined as the resistance of the resistor to be placed between $i$ and $j$ to ensure the same electrical properties as the whole circuit.
 $R$ denotes the maximal equivalent resistance between two nodes of the network  \emph{i.e.}  $R= \max_{(i,j)\in V^2}R_{ij}$. 

\paragraph{Previous results for unweighted graph}

A tight relationship between resistances in electric networks and random walks characteristic  values as the hitting times and the cover time has been established \cite{CRRS+97}. In particular, it has been shown that, for random walks on unweighted graphs

\begin{lemma}
\begin{equation*}\label{equResistance} 
\kappa_{ij}=2mR_{ij}
\end{equation*}
\end{lemma}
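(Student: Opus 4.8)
The plan is to establish the identity $\kappa_{ij} = 2mR_{ij}$ (where $m = |E|$, so that $\omega(G) = m$ in the unweighted case) by exploiting the correspondence between random walks and electrical networks sketched in Section~\ref{RW->R}. The central idea is to inject current into the network and identify the resulting voltages with hitting times. Concretely, I would consider the electrical network built from $G$ by putting a unit-conductance resistor on every edge, then inject $\deg(k)$ units of current at each vertex $k \neq j$ and extract $\sum_{k \neq j}\deg(k) = 2m - \deg(j)$ units at $j$. Let $\phi_k$ denote the resulting potential at vertex $k$, normalized so that $\phi_j = 0$.

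The key step is to show that $\phi_k = h_{kj}$ for every $k$. This follows from the harmonicity of the hitting-time function away from $j$: by a first-step (one-step conditioning) argument on the random walk, $h_{kj} = 1 + \frac{1}{\deg(k)}\sum_{\ell \in \mathcal N_k} h_{\ell j}$ for $k \neq j$, which rearranges to $\deg(k)\,h_{kj} - \sum_{\ell \in \mathcal N_k} h_{\ell j} = \deg(k)$; this is precisely Kirchhoff's current law stating that the net current out of $k$ equals $\deg(k)$. Since both $(\phi_k)$ and $(h_{kj})$ satisfy the same linear system (the discrete Poisson equation with the same boundary value $0$ at $j$), and this system has a unique solution, they coincide. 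The main obstacle is being careful about this uniqueness: one must argue that the graph Laplacian restricted to $V \setminus \{j\}$ is invertible (which holds because $G$ is connected), so the potentials are determined by the prescribed currents and the single grounding condition.

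With $\phi_k = h_{kj}$ in hand, I would run the dual experiment: inject $\deg(k)$ at each $k \neq i$ and extract at $i$, giving potentials $\psi_k = h_{ki}$ with $\psi_i = 0$. Now superpose the first experiment with the reversal of the second (i.e.\ consider the current flow $f = (\text{currents of experiment 1}) - (\text{currents of experiment 2})$). In the superposed flow, the injected current at every vertex $k \notin \{i,j\}$ cancels, so the net effect is: $2m$ units enter at $i$ and $2m$ units leave at $j$ — a pure $i$-to-$j$ flow of strength $2m$. The potential difference it induces between $i$ and $j$ is, by linearity, $(\phi_i - \phi_j) - (\psi_i - \psi_j) = (h_{ij} - 0) - (0 - h_{ji}) = h_{ij} + h_{ji} = \kappa_{ij}$. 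By the definition of equivalent resistance, a current of $2m$ flowing from $i$ to $j$ produces a potential drop of $2m R_{ij}$. Equating the two expressions for this drop gives $\kappa_{ij} = 2m R_{ij}$, completing the proof.

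One remaining routine point I would verify in passing is that superposition (linearity of the voltage response to current sources) is legitimate here: this is immediate since the network equations $L\phi = c$ are linear in the current vector $c$, with $L$ the (weighted) Laplacian. I expect no genuine difficulty beyond the uniqueness/invertibility remark already noted; the argument is essentially a packaging of Ohm's and Kirchhoff's laws together with the first-step recurrence for hitting times.
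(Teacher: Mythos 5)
Your proof is correct, but it takes a genuinely different route from the one the paper uses (the paper states this lemma as a known result and actually derives its weighted generalization $\kappa_{ij}=2\omega(G)R_{ij}$ inside the proof of the theorem in Section~\ref{conductance}). You follow the classical Chandra--Raghavan \emph{et al.} argument: inject $\deg(k)$ amps at every $k\neq j$, identify the resulting potentials with the hitting times $h_{kj}$ via the first-step recurrence, and then superpose this experiment with the reversal of the analogous one grounded at $i$, so that all interior sources cancel and a pure $2m$-amp flow from $i$ to $j$ remains, whose voltage drop is simultaneously $\kappa_{ij}$ and $2mR_{ij}$. The paper instead runs a single unit current from $i$ to $j$ and identifies $\omega(k)V_k$ with the expected number of visits $U_k^{ij}$ to $k$ on the way from $i$ to $j$; summing $U_l^{ij}+U_l^{ji}$ over all vertices then gives $\kappa_{ij}=V_{ij}\sum_l\omega(l)=2\omega(G)R_{ij}$. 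Both arguments rest on the same two pillars you correctly flag --- linearity of the network equations and uniqueness of the harmonic extension on a connected graph (invertibility of the reduced Laplacian) --- so neither has a gap. The trade-off is that your route is the shortest path to the commute-time identity alone, whereas the paper's visit-count identity $U_k^{ij}=\frac12\omega(k)(R_{ij}+R_{jk}-R_{ik})$ is finer: it yields each individual hitting time (Tetali's formula, the theorem of Section~\ref{conductance}) and the quantities needed later for the cover-time computation, with the present lemma falling out as a corollary. Note also that your argument, stated for unweighted graphs, generalizes verbatim to the weighted setting by injecting $\omega(k)$ rather than $\deg(k)$ at each vertex, recovering the paper's $2\omega(G)R_{ij}$.
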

  
where $i$ and $j$ denote two distinct vertices and $m$, the number of edges.

From this equation, we have: 

\begin{lemma}
\begin{equation*} mR<C<O(mR\log n) \end{equation*} 
\end{lemma}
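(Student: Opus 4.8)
The plan is to establish the two inequalities separately, each resting on the commute-time identity $\kappa_{ij}=2mR_{ij}$ recalled just above. Throughout, write $C_i$ for the expected number of steps a walk started at $i$ needs to visit every vertex, so that $C=\max_i C_i$.

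For the lower bound, I would pick a pair $(u,v)$ of vertices with $R_{uv}=R$. A walk started at $u$ can cover $G$ only after it has visited $v$, so $C_u\ge h_{uv}$, and symmetrically $C_v\ge h_{vu}$; hence
$$C \;\ge\; \max(C_u,C_v) \;\ge\; \max(h_{uv},h_{vu}) \;\ge\; \frac{h_{uv}+h_{vu}}{2} \;=\; \frac{\kappa_{uv}}{2} \;=\; mR_{uv} \;=\; mR .$$
To upgrade this to a strict inequality when $|V|\ge 3$, I would note that $C=mR$ would force the walk from $u$ to have covered all of $G$ precisely at its first hit of $v$ almost surely, and likewise from $v$; choosing instead a start vertex $w$ that must still reach both $u$ and $v$ shows $C_w$ exceeds this value. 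This is the only mildly delicate point, and it is combinatorial bookkeeping rather than the substance of the lemma.

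For the upper bound, I would invoke Matthews' cover-time inequality: for an irreducible Markov chain on $n$ states, from every start vertex the expected cover time is at most $\bigl(\max_{i\ne j}h_{ij}\bigr)\bigl(1+\tfrac12+\cdots+\tfrac1{n-1}\bigr)$, hence $C\le\bigl(\max_{i\ne j}h_{ij}\bigr)H_{n-1}$ with $H_{n-1}\le 1+\ln n$. It then remains to bound a single hitting time, and here the previous lemma does the work: since hitting times are nonnegative, $h_{ij}\le h_{ij}+h_{ji}=\kappa_{ij}=2mR_{ij}\le 2mR$ for every pair $(i,j)$. Therefore $C\le 2mR\,(1+\ln n)=O(mR\log n)$.

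If one prefers not to cite Matthews' bound, an elementary alternative bounds $C_u$ by the expected length of a depth-first Euler tour of a spanning tree $T$, namely $\sum_{e=\{i,j\}\in T}\kappa_{ij}=2m\sum_{e\in T}R_{ij}\le 2m(n-1)R$; this is fully self-contained from the preceding lemma but only yields $C=O(mnR)$. Recovering the logarithmic factor genuinely requires the random-permutation idea underlying Matthews' inequality, so I expect the main work, beyond routine chaining of inequalities, to lie in setting up and justifying that step; everything else follows directly from $\kappa_{ij}=2mR_{ij}$.
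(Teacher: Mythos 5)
The paper states this lemma without proof, recalling it as a known consequence of $\kappa_{ij}=2mR_{ij}$ from the cited literature, so there is no in-paper argument to match yours against; judged on its own terms, your reconstruction is correct. Your lower bound is exactly the standard one: pick $(u,v)$ realizing $R$, use $C\ge\max(h_{uv},h_{vu})\ge\tfrac12\kappa_{uv}=mR_{uv}=mR$. Your upper bound routes through Matthews' inequality together with $h_{ij}\le\kappa_{ij}=2mR_{ij}\le 2mR$, which cleanly yields $C\le 2mR\,H_{n-1}=O(mR\log n)$; the original source instead runs an epoch argument (by Markov's inequality the walk hits any fixed target within $O(mR)$ steps with constant probability from any start, so $O(\log n)$ epochs plus a union bound over targets cover the graph). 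Both are legitimate; Matthews is arguably cleaner and gives better constants, while the epoch argument avoids invoking an external inequality. Two small caveats: the strict inequality $mR<C$ actually fails for $n=2$ (a single edge has $mR=C=1$), so your "combinatorial bookkeeping" for strictness genuinely needs $n\ge 3$ and a little care; and you should make explicit which convention for $C$ (maximum over starting vertices) the inequality uses, since the paper is loose about this. Your fallback spanning-tree bound $\sum_{\{i,j\}\in T}\kappa_{ij}\le 2m(n-1)R$ is correct but, as you say, loses the logarithmic factor, so it cannot replace the Matthews-type step.
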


In \cite{Teta91}, hitting times on unweighted graphs are expressed only in terms of resistances
\begin{lemma}\label{tetali}
\begin{equation*}
h_{ij}=mR_{ij}+\frac{1}{2}\sum_{k\in V}deg(k) \left(R_{jk}-R_{ik}\right)
\end{equation*}
\end{lemma}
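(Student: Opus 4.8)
The plan is to exploit the electrical-network correspondence recalled above. Fix the target vertex $j$. The first-step equations for hitting times, namely $h_{jj}=0$ and $h_{ij}=1+\sum_{k\in{\cal N}_i}\frac{\omega(i,k)}{\omega(i)}h_{kj}$ for $i\neq j$, can be rewritten --- multiply by $\omega(i)=deg(i)$, use $\sum_{k}\omega(i,k)=\omega(i)$ and Ohm's law $I_{ik}=\omega(i,k)(\varphi_i-\varphi_k)$ --- as the assertion that the function $\varphi_i:=h_{ij}$ is the potential, normalised by $\varphi_j=0$, of the electrical network built from $G$ when a current of intensity $deg(k)$ is injected at every vertex $k$ and the total current $\sum_{k}deg(k)=2m$ is withdrawn at $j$. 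First I would check this reformulation carefully: Kirchhoff's node law at a vertex $i\neq j$ reads ``net current leaving $i$ equals $deg(i)$'', which is exactly the rearranged recurrence, while the balance at $j$ holds automatically by conservation of current.

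Next, by linearity of the network I would decompose this flow. Let $v^{(k)}$ denote the potential, again grounded at $j$, produced by a single unit of current injected at $k$ and drawn off at $j$; then $h_{ij}=\varphi_i=\sum_{k\in V}deg(k)\,v^{(k)}_i$. So the problem reduces to computing $v^{(k)}_i$, the potential at $i$ relative to $j$ when one unit of current flows from $k$ to $j$.

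The crux --- and the step I expect to be the main obstacle --- is the transfer-resistance identity
\[
v^{(k)}_i=\tfrac12\bigl(R_{ij}+R_{jk}-R_{ik}\bigr).
\]
I would establish it via reciprocity. Define $\rho(a,b;s,t)$ to be the potential drop from $a$ to $b$ caused by a unit current from $s$ to $t$. It is bilinear in the two ``dipoles'' $(a,b)$ and $(s,t)$ (superposition of sources and of measurement points), symmetric under exchanging the two dipoles (the reciprocity theorem, i.e.\ symmetry of the network's response operator), and it satisfies $\rho(s,t;s,t)=R_{st}$ by the very definition of the equivalent resistance. Thus $\rho$ is a symmetric bilinear form whose associated quadratic form is $(s,t)\mapsto R_{st}$. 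Applying the polarisation identity to the vectors $e_a-e_t$ and $e_s-e_t$, whose difference is $e_a-e_s$, gives $\rho(a,t;s,t)=\tfrac12(R_{at}+R_{st}-R_{as})$; subtracting the same expression with $a$ replaced by $b$ yields $\rho(a,b;s,t)=\tfrac12(R_{at}+R_{bs}-R_{as}-R_{bt})$. Specialising to $a=i,\ b=j,\ s=k,\ t=j$ and using $R_{jj}=0$ produces the claimed value of $v^{(k)}_i=\rho(i,j;k,j)$.

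Finally I would substitute and collect terms, using $\sum_{k\in V}deg(k)=2m$:
\[
h_{ij}=\sum_{k\in V}deg(k)\cdot\tfrac12\bigl(R_{ij}+R_{jk}-R_{ik}\bigr)=\tfrac12R_{ij}\sum_{k\in V}deg(k)+\tfrac12\sum_{k\in V}deg(k)\bigl(R_{jk}-R_{ik}\bigr)=mR_{ij}+\tfrac12\sum_{k\in V}deg(k)\bigl(R_{jk}-R_{ik}\bigr).
\]
Everything except the reciprocity/bilinear-form argument behind the transfer-resistance identity is mechanical bookkeeping; that identity --- equivalently, the fact that effective resistance is a squared Euclidean metric, so that the ``parallelogram'' combination $\tfrac12(R_{at}+R_{bs}-R_{as}-R_{bt})$ recovers the underlying inner product --- is where the substance lies. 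As a consistency check, adding the resulting formulas for $h_{ij}$ and $h_{ji}$ makes the telescoping sum cancel and returns $\kappa_{ij}=2mR_{ij}$, in agreement with the commute-time lemma stated above.
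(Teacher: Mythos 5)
Your proof is correct, but it takes a genuinely different route from the paper's (which in fact never proves this unweighted lemma directly: it cites it from Tetali and later proves the weighted generalization $h_{ij}=\omega(G)R_{ij}+\frac12\sum_k\omega(k)(R_{jk}-R_{ik})$, of which this is the case $\omega\equiv1$). You interpret the vector $(h_{ij})_{i\in V}$ itself as the potential of a single multi-source flow --- inject $deg(k)$ at every vertex $k$, extract $2m$ at $j$ --- split it by superposition into unit flows $k\to j$, and evaluate the transfer resistance $v^{(k)}_i=\frac12(R_{ij}+R_{jk}-R_{ik})$ by polarizing the quadratic form $(s,t)\mapsto R_{st}$, taking the reciprocity theorem as a known fact. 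The paper instead works with $U_k^{ij}$, the expected number of visits to $k$ on a walk from $i$ to $j$: it shows $U_k^{ij}=\omega(k)V_k^{ij}$ by matching the harmonic equations satisfied by visit counts and by potentials, and then derives $U_k^{ij}=\frac12\omega(k)(R_{ij}+R_{jk}-R_{ik})$ through an explicit three-node equivalent-network computation whose real purpose is to establish the reciprocity identity $V_j^{ik}=V_k^{ij}$ by hand; summing over $k$ then gives the formula. The two decompositions are dual to one another precisely through reciprocity, since $\sum_k\omega(k)V_k^{ij}=\sum_k\omega(k)v_i^{(k)}$. Your route is shorter and makes the ``squared Euclidean metric'' structure of effective resistance explicit, at the price of invoking reciprocity as a black box; the paper's route is self-contained on that point and produces the intermediate quantity $U_k^{ij}$ (expected visit counts), which it reuses elsewhere (e.g.\ in the section on hitting times in the first sense). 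Your argument generalizes verbatim to weighted graphs by replacing $deg(k)$ with $\omega(k)$ and $2m$ with $2\omega(G)$, recovering the paper's weighted theorem.
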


\paragraph{Resistances computation}
Thanks to the Millman's theorem, we are able to compute all the resistances, as we have shown in \cite{BBBS03} 

\begin{theorem}[Millman's theorem]\label{millman}
Consider an electrical network, 
on any node $i$, the following relation holds: 

$$
V_i = \frac{\sum_{j\in {\cal N}_i}  \frac{V_j}{r_{ij}} }{\sum_{j\in {\cal N}_i}  \frac{1}{r(i,j)}}
$$

that is

$$\frac{V_i-V_{j_1}}{r(i,j_1)}+\frac{V_i-V_{j_2}}{r(i,j_2)}+\cdots+\frac{V_i-V_{j_n}}{r(i,j_n)} =0$$ 
where $j_1,\cdots,j_n$ are the neighbors of $i$, $V_{j_1},\cdots ,V_{j_n}$ are the
voltages of each of these nodes.
\end{theorem}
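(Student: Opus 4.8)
The plan is to obtain the stated identity directly from the two laws that govern a linear resistive network in steady state: Ohm's law and Kirchhoff's current law. First I would fix notation. For an edge $(i,j)$, whose resistor has resistance $r(i,j)$ (equivalently conductance $c(i,j)=1/r(i,j)=\omega(i,j)$), let $I_{ij}$ denote the current flowing along that resistor \emph{from} $i$ \emph{toward} $j$. Ohm's law gives
$$I_{ij}=\frac{V_i-V_j}{r(i,j)},$$
and by construction $I_{ij}=-I_{ji}$. So far this is just the definition of the network.

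The key step is to apply Kirchhoff's current law at the node $i$. Under the hypothesis implicit in the statement, namely that no current is injected into or drained from the network at $i$ (charge conservation at an interior node), KCL reads $\sum_{j\in\mathcal N_i}I_{ij}=0$. Substituting Ohm's law,
$$\sum_{j\in\mathcal N_i}\frac{V_i-V_j}{r(i,j)}=0,$$
which is precisely the second displayed equation of the statement.

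It remains to rewrite this as the quotient form. Pulling $V_i$ out of the sum on the left-hand side gives $V_i\sum_{j\in\mathcal N_i}\frac1{r(i,j)}=\sum_{j\in\mathcal N_i}\frac{V_j}{r(i,j)}$. Because the weight assignment is positive, the coefficient $\sum_{j\in\mathcal N_i}\frac1{r(i,j)}=\omega(i)$ is strictly positive for any node $i$ with at least one neighbour, so we may divide through and obtain $V_i=\bigl(\sum_{j\in\mathcal N_i}V_j/r(i,j)\bigr)\big/\bigl(\sum_{j\in\mathcal N_i}1/r(i,j)\bigr)$, the first displayed equation. The connectedness of $G$ guarantees every node has a neighbour.

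Granting the physical laws, the argument is routine; the one point that deserves care — and the place I would expect a careful reader to push back — is the precise scope of the phrase ``on any node $i$''. The identity as written holds at nodes carrying no external current; when Millman's theorem is invoked to compute an equivalent resistance $R_{ij}$ one instead imposes boundary data at the two terminals (a fixed potential difference, or a unit current injected at one terminal and extracted at the other), and Millman's relation is then applied at each of the remaining $n-2$ nodes. I would therefore add a sentence making this hypothesis explicit, and remark that these $n-2$ linear equations together with the boundary conditions determine all node voltages uniquely, which is what makes the resistance computation of \cite{BBBS03} well posed.
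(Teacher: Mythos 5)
Your proof is correct, and it is the standard derivation: Kirchhoff's current law at $i$ combined with Ohm's law on each incident resistor gives the second displayed equation, and the quotient form follows by isolating $V_i$. The paper in fact offers no proof of this theorem at all --- it is stated as a known physical result imported from \cite{BBBS03} --- so your argument simply supplies the justification the authors take for granted. Your closing caveat is well taken and worth keeping: the identity only holds at nodes where no external current is injected, which is exactly why the paper's improved method in section \ref{solmat} must replace the rows corresponding to $i$ and $j$ (the current source and sink) by $\sum_{l\in N(i)}c_{il}(V_i-V_l)=1$ and $\sum_{l\in N(j)}c_{jl}(V_j-V_l)=-1$ rather than setting them to zero; the phrase ``on any node $i$'' in the statement is therefore slightly too broad as written.
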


\section{Hitting times in the first and second sense}\label{hitting}

\subsection{Hitting times on weighted graphs}\label{conductance}

In \cite{BBBS03}, we provide an automatic way to compute resistance on unweighted graphs. Thanks to our method,  we can deduce from lemma \ref{tetali} above, the value of the hitting time between two nodes on such a graph. In this section, in order to generalize such a method , we establish  the relation between hitting times and equivalent resistances for weighted graph. 

\begin{theorem}
\begin{equation}\label{R->h}
h_{ij} = \omega(G)R_{ij} + \frac12\sum_{k\in G}\omega(k)(R_{jk}-R_{ik})
\end{equation}
\end{theorem}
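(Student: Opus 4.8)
The plan is to reduce the weighted statement to Tetali's formula (Lemma \ref{tetali}) by a careful choice of the electrical network, or — more robustly — to reprove Tetali's identity directly in the weighted setting using the harmonicity of hitting times and the probabilistic interpretation of voltages. I would follow the second route, since it makes the role of the weights transparent.

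First I would recall the two facts that make the electrical analogy work for weighted walks. (i) Fix the target $j$ and inject a current of $\omega(k)$ amperes at every node $k$, while extracting the total $\sum_k \omega(k)$ at $j$; let $v_k$ be the resulting voltage (with $v_j=0$). By Millman's theorem (Theorem \ref{millman}) together with Kirchhoff's current law, the function $k\mapsto v_k$ satisfies, at every $k\neq j$, $\sum_{\ell\in\mathcal N_k} c(k,\ell)(v_k-v_\ell)=\omega(k)$, i.e. $v_k - \sum_{\ell} \frac{\omega(k,\ell)}{\omega(k)} v_\ell = 1$. (ii) The hitting time $h_{kj}$ in the first sense satisfies the same recurrence: $h_{kj}=1+\sum_{\ell\in\mathcal N_k}\frac{\omega(k,\ell)}{\omega(k)}h_{\ell j}$ for $k\ne j$, and $h_{jj}=0$. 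Since the associated linear system has a unique solution (the walk hits $j$ with probability $1$, so $I-P$ restricted to $V\setminus\{j\}$ is invertible), we conclude $h_{kj}=v_k$ for all $k$. This identifies every hitting time to $j$ with a voltage in one fixed network.

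Next I would compute that voltage by superposition. The current configuration ``$+\omega(k)$ at each $k$, $-\sum_k\omega(k)$ at $j$'' decomposes as a sum over source nodes $k$ of the unit-type configuration ``$+\omega(k)$ at $k$, $-\omega(k)$ at $j$''. For the elementary configuration that injects one unit at $k$ and removes it at $j$, the voltage drop from $i$ to $j$ is, by the standard two-point formula, not simply $R_{ij}$; rather the drop from $i$ to $j$ equals $\frac12(R_{ik}+R_{ij}-R_{jk})$ — this is the ``potential kernel'' identity $\langle$ injecting at $k$, collecting at $j$ $\rangle$ read at $i$, which follows from $R_{ab}+R_{bc}\ge R_{ac}$-type manipulations and the linearity of the network, exactly as in Tetali's original argument. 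Summing over $k$ with weight $\omega(k)$ and using $R_{ii}=0$ gives
\begin{equation*}
h_{ij}=v_i=\sum_{k\in V}\omega(k)\cdot\tfrac12\bigl(R_{ik}+R_{ij}-R_{jk}\bigr)=\tfrac12\,\omega(G')\cdot 2R_{ij}\;(?)\;+\;\tfrac12\sum_{k}\omega(k)(R_{ik}-R_{jk}),
\end{equation*}
and one must now reconcile the coefficient of $R_{ij}$: since $\sum_{k\in V}\omega(k)=2\omega(G)$, the term $\tfrac12 R_{ij}\sum_k\omega(k)$ is $\omega(G)R_{ij}$, while the remaining sum is $-\tfrac12\sum_k\omega(k)(R_{jk}-R_{ik})$; collecting signs yields exactly $h_{ij}=\omega(G)R_{ij}+\tfrac12\sum_{k}\omega(k)(R_{jk}-R_{ik})$, i.e. \eqref{R->h}. (The unweighted case $\omega\equiv 1$, $\omega(k)=\deg(k)$, $\omega(G)=m$ recovers Lemma \ref{tetali}, which is a useful sanity check.)

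The main obstacle, and the step I would spend the most care on, is the elementary two-point fact that injecting a unit current at $k$ and collecting it at $j$ produces a voltage difference between $i$ and $j$ equal to $\tfrac12(R_{ik}+R_{ij}-R_{jk})$. Getting the $\tfrac12$ and the three resistances in the right places is exactly where Tetali's proof does its real work; everything else here is bookkeeping (superposition, the normalization $\sum_k\omega(k)=2\omega(G)$, sign tracking). A clean way to nail it is: the voltage at $i$ when one unit enters at $k$ and leaves at $j$ equals, by reciprocity, half the sum of the effective resistances appearing in the Wheatstone-type expansion; alternatively, write it as $\tfrac12\bigl(R_{ij}-R_{ik}+R_{kj}\bigr)$ flipped appropriately and pin down the orientation by testing $i=j$ (gives $0$), $i=k$ (gives $R_{kj}$), and the antisymmetry $k\leftrightarrow$ source. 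Once that lemma is in hand, the theorem follows in a couple of lines.
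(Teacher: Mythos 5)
Your argument is correct in outline but takes a genuinely different route from the paper's. The paper works with the occupation measure $U_k^{ij}$ (expected number of visits to $k$ on the way from $i$ to $j$): it shows that $k\mapsto U_k^{ij}/\omega(k)$ solves the same node equations as the potential when a unit current flows from $i$ to $j$, derives $\kappa_{ij}=2\omega(G)R_{ij}$, and then obtains $U_k^{ij}=\tfrac12\omega(k)(R_{ij}+R_{jk}-R_{ik})$ from a reciprocity identity ($V_j^{ik}=V_k^{ij}$) proved by reducing the network to a three-terminal star on $\{i,j,k\}$; summing over $k$ gives the theorem. You instead identify the whole vector of hitting times $h_{kj}$ with the potential of a single multi-source network (inject $\omega(k)$ at each $k$, extract $\sum_k\omega(k)=2\omega(G)$ at $j$) by matching the two linear systems, and then evaluate that potential by superposition over elementary unit flows from $k$ to $j$. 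Both proofs hinge on the same crux: the potential at a third node $i$ when a unit current flows from $k$ to $j$. Your route buys a cleaner global statement (all $h_{kj}$ at once, and the normalization $\sum_k\omega(k)=2\omega(G)$ falls out automatically); the paper's route buys an actual derivation of the crux identity via the star reduction, rather than a citation.

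Two points need attention. First, your elementary two-point formula is mis-oriented as written: with $v_j=0$ and one unit entering at $k$ and leaving at $j$, the potential at $i$ is $\tfrac12\bigl(R_{ij}+R_{jk}-R_{ik}\bigr)$, not $\tfrac12\bigl(R_{ik}+R_{ij}-R_{jk}\bigr)$; your own sanity check $i=k$ applied to the version you wrote gives $0$ instead of $R_{kj}$. With the corrected orientation the weighted sum over $k$ yields the theorem directly, with no sign reconciliation needed. Second, that two-point formula is the entire technical content of the proof and cannot be left as ``exactly as in Tetali's argument'': to make the proof self-contained you must establish it, either by the paper's three-terminal reduction or by a reciprocity/superposition argument for the three networks $(i,j)$, $(j,k)$, $(i,k)$.
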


\begin{proof}
The following reasoning is inspired by \cite{Teta91}.

Let $U_k^{ij}$ be the expected number of visits to $k$ in a random walk from $i$ to $j$. Then $U_j^{ij}=0$, and:\[U_k^{ij}=\sum_{l\in N(k)}U_l^{ij}p_{lk}=\sum_{l\in N(k)}U_l^{ij}\frac{\omega(l,k)}{\omega(l)}\]

Thus, \begin{equation}\label{U}\sum_{l\in N(k)}\omega(l,k)\frac{U_l^{ij}}{\omega(l)}=\omega(k)\frac{U_k^{ij}}{\omega(k)}\end{equation}

On the other hand, according to Kirchoff's current laws, when a unit current flows from $j$ to $i$, on each node $k$ except $i$ and $j$: ($V_k^{ij}$ denotes the potential on node $k$ when a unit current flows from $i$ to $j$ and $V_{kl}^{ij}=V_l^{ij}-V_k^{ij}$; for the sake of legibility, we may not write the superscript when it is obvious that the current flows from $i$ to $j$)
\begin{equation}\label{V}
\begin{split}
\sum_{l\in N(k)} c_{lk}V_{lk}&=0\\
\sum_{l\in N(k)} c_{lk}(V_{l}-V_{k})&=0\\
\sum_{l\in N(k)} c_{lk}V_{l}&=\left(\sum_{l\in N(k)} c_{lk}\right) V_{k}\\
\sum_{l\in N(k)} \omega(l,k)V_{l}&=\left(\sum_{l\in N(k)} \omega(l,k)\right) V_{k}\\
\sum_{l\in N(k)} \omega(l,k)V_{l}&=\omega(k) V_{k}
\end{split}
\end{equation}
From equations (\ref{U}) and (\ref{V}), since there is a single steady state in an electrical circuit,
we deduce that $\forall k, V_{kj}=\lambda\frac{U_k^{ij}}{\omega(k)}$, with $\lambda$ a factor such that the intensity circulating between $i$ and $j$ is 1,
\emph{i.e.} $\sum_{k\in N(j)}c_{kj}\lambda \frac{U_k^{ij}}{\omega(k)}=1$. If there were several solutions to (\ref{V}) with the same potentials on nodes $i$ and $j$, there would be several electrical steady states in this circuit.  Now, $\sum_{k\in N(j)}c_{kj}\lambda \frac{U_k^{ij}}{\omega(k)}=\lambda\sum_{k\in N(j)} U_k^{ij}\frac{\omega(k,j)}{\omega(k)}=\lambda$ for $\sum_{k\in N(j)} U_k^{ij}\frac{\omega(k,j)}{\omega(k)}$ is the average number of traversal toward $j$ of a random walk from $i$ to $j$, and this can only be 1. Thus, $\lambda=1$, and \begin{equation}\label{U=f(V)}V_{k}=\frac{U_k^{ij}}{\omega(k)}\end{equation}

Since $h_{ij}$ is the expected time for the walk to go from $i$ to $j$, $h_{ij}$ is the sum of the average number of visits of each site in the graph in the walk from $i$ to $j$. By linearity of the expectation, $h_{ij}=\sum_{l\in G}U_l^{ij}$. The expected \emph{commute time} $\kappa_{ij}$, which is the average time for a random walk to go from $i$ to $j$ and back is:
\begin{equation}\label{R->kappa}
\begin{split}
\kappa_{ij}&=\sum_{l\in G} U_l^{ij}+\sum_{l\in G} U_l^{ji}\\
&=\sum_{l\in G} \omega(l)(V_{lj}-V_{li})\\&=V_{ij}\sum_{l\in G}\omega(l)\\&=2\omega(G)R_{ij}
\end{split}
\end{equation}
where $R_{ij}$ is the equivalent resistance of the network between $i$ and $j$, \emph{i.e.} the voltage between those two nodes when a unit current enters $i$ and leaves $j$. Note that $U_l^{ji}=-V_{li}$, since the current goes from $i$ to $j$ instead of going from $j$ to $i$. 

Thanks to Kirchoff's current law applied on $j$, and using $V_i^{ik}=0$ (the potential is defined up to a constant, and we can assign an arbitrary value to one potential in the network; the resistance is defined by the difference between two potentials, this does not affect it), we have:
\[
V_j^{ik}=\frac{R_{jk}^{ijk}}{R_{jk}^{ijk}+R_{ij}^{ijk}}V_k^{ij}
\]
with $R_{ij}^{ijk}$, $R_{ik}^{ijk}$ and $R_{jk}^{ijk}$ the resistances of the resistors to be placed between $i$, $j$ and $k$ to ensure the same electrical properties as the original network: the resistances that give the potentials on any two of those three nodes allow us to obtain the third one like in the whole graph. As a result of applying this law on $k$, we obtain:
\[
V_k^{ik}=\frac{1+R_{jk}^{ijk}V_j^{ik}}{R_{kj}^{ijk}+R_{ik}^{ijk}}
\]
then:
\[
V_j^{ik}=\frac{R_{jk}^{ijk}}{R_{ij}^{ijk}R_{ik}^{ijk}+R_{ji}^{ijk}R_{jk}^{ijk}+R_{ki}^{ijk}R_{kj}^{ijk}}
\]

Thus, this formula being symmetrical in $j$ and $k$, $V_j^{ik}=V_k^{ij}$. So  \begin{equation*}
\begin{split}
R_{ik}&=V_{k}^{ik}\\&=V_{k}^{ik} + V_{j}^{ik} - V_{j}^{ik} \\&= V_{k}^{ik} + V_{k}^{ij} - V_{j}^{ik} \\&= V_{jk}^{ik} + V_{ik}^{ij}\\&=V_{jk}^{ik} + V_{ki}^{ji}\\&=\frac{U_j^{ik}}{\omega(j)}+\frac{U_k^{ji}}{\omega(k)}\end{split}\end{equation*}

Thus, \begin{equation*}
\begin{split}
R_{ik}+R_{kj}-R_{ij} &= \left(\frac{U_j^{ik}}{\omega(j)}+\frac{U_k^{ji}}{\omega(k)}\right)\\&\phantom{=} + \left(\frac{U_k^{ij}}{\omega(k)}+\frac{U_i^{jk}}{\omega(i)}\right)\\&\phantom{=} - \left(\frac{U_k^{ij}}{\omega(k)}+\frac{U_k^{ji}}{\omega(k)}\right) \\&= \frac{U_i^{jk}}{\omega(i)}+\frac{U_j^{ik}}{\omega(j)} \\&= 2\frac{U_k^{ij}}{\omega(k)}
\end{split}
\end{equation*}

Then,
\[
U_k^{ij}=\frac12 \omega(k)(R_{ij}+R_{jk}-R_{ik})
\]

and $h_{ij}=\sum_{k\in G}U_k^{ij}$
\end{proof}

\subsection{An efficient method to compute resistances}\label{solmat}

\paragraph{The basic method}
Our first solution (detailed in \cite{BBBS03}) to compute the equivalent resistance between two given nodes $i$ and $j$ consists in applying a 1V potential value on node $i$ and 0V on node $j$. $R_{ij}$  is obtained by the ratio of the potential difference between $V_i - V_j$ to the current circulating between these two nodes. The latter is established by the knowledge of the potentials at all the adjacent nodes of $i$ or $j$, given by the application of the Millman Theorem \ref{millman}.  An equivalent resistance is then computed by one matrix inversion (complexity $O(n^3)$) but $2n$ equivalent resistance computations are also necessary to obtain one hitting time by formula (\ref{R->h}).

We now propose an improved method. The basic idea comes from the observation that most of the matrices inverting in the previous method are similar. 

\paragraph{The improved method}
If we consider a 1A current injected in $i$ and flowing out in $j$, the Millman system can be rewritten as following:

\begin{equation*}
\left\{\begin{array}{l}
\forall k\in V\backslash\{i,j\}, \sum_{l\in N(k)} c_{kl} (V_k-V_l)=0\\
\sum_{l\in N(i)} c_{il} (V_i-V_l)=1\\
\sum_{l\in N(j)} c_{jl} (V_j-V_l)=-1
\end{array}\right.
\end{equation*}

This can be written:
\begin{equation*}
\Delta V=v
\end{equation*}
with $\Delta$ the matrix built from the conductance matrix by letting the entry $(k,k)$ be $-\sum_{l\in N(k)}c_{kl}$, and $v$ the vector with all entry 0 except the $i$-th one 1 and the $j$-th one -1.

However, $\Delta$ is not invertible. Indeed, in this system, no potential is specified and the potential is defined up to a constant. Thus we build a matrix $\Delta_2$ by replacing one of the lines of $\Delta$ (for instance the first one) by the corresponding line of the unity matrix, to set a potential in the system. 

Thus
\begin{equation*}
\Delta_2 V=v_2
\end{equation*}

where $v_2$  is obtained by replacing the first line 
(the same line index corresponding to the potential set) by an arbitrary value, 1 for instance.

Now, $\Delta_2$ is invertible, and $\Delta_2^{-1}v$ is a solution to $\Delta V=v$ that provides  the potentials on each node. 

The equivalent resistance between $i$ and $j$ is 
thus
$$
R_{ij} = \Delta_2^{-1}(i,j)-\Delta_2^{-1}(j,j)-\Delta_2^{-1}(i,i)+\Delta_2^{-1}(j,i).
$$

and consequently the hitting times can be deduced by formula  (\ref{R->h}).

The resistances between all pairs of nodes can thus be computed by inverting a single matrix, with a $O(n^3)$ complexity, while the basic method required  a $O(n^4)$ complexity.

The solution to compute the hitting times presented in \cite{KeSn76} is based on matrix computations and also has a $O(n^3)$ complexity. However, our method provides extra information that allow an efficient computation of the hitting times in the first sense and of the cover time, 
which the method in \cite{KeSn76} does not.

An increase in the conductance between two nodes in a circuit can only increase the global conductance of the circuit, and this by a factor of at most $1+\frac{r_{ij}}{R_{ij}}$. Thus, adding a new edge $(i, j)$ in graph can increase the commute time in a factor of at most $1+\frac{\omega(i, j)}{\omega(G)}$. This remark is known as Rayleigh's shortcut principle.

This principle shows that in a dense network, the adjonction or the removal of an edge does not modify much of the hitting time, namely, by at most $1+\frac{2\omega(i, j)}{\omega(G)}$. Thus, the computation of the hitting times of a graph provides results for a wide array of graphs deducing from the first one by removing or adding a few edges to it.

\subsection{Example}

\begin{figure}
\begin{center}
\includegraphics[width=.5\linewidth]{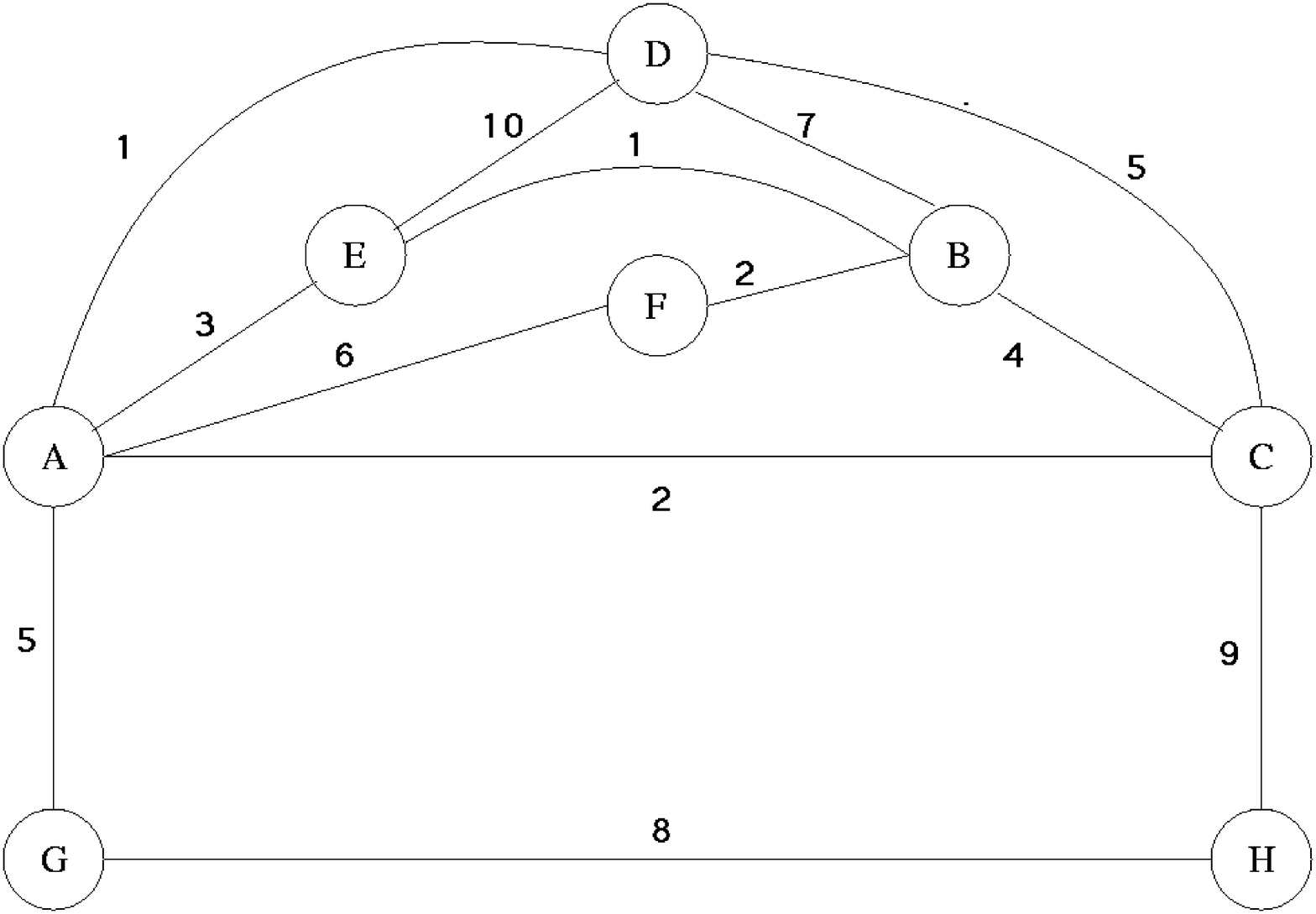}
\caption{example}
\label{ex}
\end{center}
\end{figure}

For the graph on figure \ref{ex}, the matrix $\Delta_2$ is:

\begin{center}
$\left (
\begin{tabular}{cccccccc}1 & 0 & 0 & 0 & 0 & 0 & 0 & 0 \cr 0 & -14 & 4 
& 7 & 1 & 2 & 0 & 0 \cr 2 & 4 & -20 & 5 & 0 & 0 & 0 & 9 \cr 1 & 7 & 5 &
     -23 & 10 & 0 & 0 & 0 \cr 3 & 1 & 0 & 10 & -14 & 0 & 0 & 0 \cr 6 & 2 
& 0 & 0 & 0 & -8 & 0 & 0 \cr 5 & 0 & 0 & 0 & 0 & 0 &
     -13 & 8 \cr 0 & 0 & 9 & 0 & 0 & 0 & 8 & -17 \cr
\end{tabular}
\right ) $
\end{center}

$\Delta_2^{-1}$ is:

 \setlength{\extrarowheight}{3 pt}
$$\left(\begin{array}{cccccccc}1 & 0 & 0 & 0 
\cr 1 & - \frac{102091}{627268}   & - \frac{56049}{627268}
       & - \frac{67339}{627268}   
\cr 1 & - \frac{56049}{627268}       & - \frac{84309}{627268}   & - \frac{53851}{627268}   
       \cr 1 & - \frac{67339}{627268}   & - \frac{53851}{627268}   & - 
\frac{89297}{627268}
\cr 1 & - \frac{110783}{1254536}   & - \frac{84937}{1254536}
        & - \frac{137187}{1254536}  
\cr 1 & - \frac
        {102091}{2509072}   & - \frac{56049}{2509072}   & - 
\frac{67339}{2509072} 
 \cr 1 & - \frac{6426}{156817}   & - 
\frac{9666}{156817}   & -
      \frac{6174}{156817}
\cr 1 & - \frac{41769}{627268}   & - \frac{62829}{627268}   & - 
\frac{40131}{627268}
\end{array}
\right.
\left.
\begin{array}{cccccccc} 0 & 0 & 0 & 0 \cr  - \frac{110783}{1254536}   & - 
\frac{102091}{2509072}
       & - \frac{6426}{156817}   & - \frac{41769}{627268}   \cr - 
\frac{84937}{1254536}
       & - \frac{56049}{2509072}   & - \frac{9666}{156817}   & - 
\frac{62829}{627268}
       \cr  - \frac{137187}{1254536}   & - \frac{67339}{2509072}   & - 
\frac{6174}{156817}
       & - \frac{40131}{627268}   \cr    - \frac{391027}{2509072}   & - 
\frac{110783}
      {5018144}   & - \frac{4869}{156817}   & - \frac{63297}{1254536}   
\cr  -
      \frac{110783}{5018144}   & - \frac{1356627}{10036288}   & - 
\frac{3213}{313634}   & -
      \frac{41769}{2509072}   \cr  - \frac{4869}{156817}   & - 
\frac{3213}{313634}   & -
       \frac{21413}{156817}   & - \frac{15194}{156817}   \cr  - \frac{63297}{1254536}   & -
      \frac{41769}{2509072}   & - \frac{15194}{156817}   & - 
\frac{98761}{627268}   \cr
\end{array}\right)$$

The resistance matrix is:

\setlength{\extrarowheight}{3 pt}
$$\left(\begin{array}{cccc}0 & \frac{102091}{627268} & 
\frac{84309}{627268} & \frac{89297}{627268}  \cr 
\frac{102091}{627268} & 0 & \frac{37151}{313634} & \frac{28355}
    {313634} \cr \frac{84309}
    {627268} & \frac{37151}{313634} & 0 & \frac{16476}{156817} 
\cr \frac{89297}{627268} & 
\frac{28355}{313634} & \frac{16476}{156817} & 0 
\cr \frac{391027}{2509072} & \frac{356259}
    {2509072} & \frac{388515}{2509072} & \frac{199467}{2509072} 
 \cr \frac{1356627}{10036288} & 
\frac{2173355}{10036288} & \frac{2257179}{10036288} & \frac{2246667}
    {10036288} 
  \cr \frac{21413}
    {156817} & \frac{136335}{627268} & \frac{92633}{627268} & 
\frac{125557}{627268}  \cr \frac{98761}{627268} & 
\frac{58657}{313634} & \frac{14353}{156817} & \frac{26949}
    {156817}
\end{array}\right.
\left.\begin{array}{cccc} \frac{391027}{2509072} & 
\frac{1356627}
    {10036288} & \frac{21413}{156817} & \frac{98761}{627268} \cr 
 \frac{356259}{2509072} & \frac{2173355}{10036288} & 
\frac{136335}{627268} & \frac{58657}{313634} \cr 
\frac{388515}{2509072} & \frac{2257179}{10036288} & \frac{92633}
    {627268} & \frac{14353}{156817} \cr  \frac{199467}
    {2509072} & \frac{2246667}{10036288} & \frac{125557}{627268} & 
\frac{26949}{156817} \cr  0 & 
\frac{2477603}{10036288} & \frac{577827}{2509072} &
     \frac{532883}{2509072} \cr  \frac{2477603}{10036288} & 0 & \frac{2521427}{10036288} 
& \frac{2602651}{10036288} \cr \frac{577827}{2509072} & \frac{2521427}
    {10036288} & 0 & \frac{62861}{627268} \cr  \frac{532883}{2509072} & \frac{2602651}{10036288} & 
\frac{62861}{627268} & 0 \cr
\end{array}\right)$$

and the hitting times matrix is:
  \setlength{\extrarowheight}{3 pt}
$$\left(\begin{array}{cccc}0 & \frac{6740527}{627268} & 
\frac{4636113}{627268} & \frac{5079563}{627268} 
\cr 
\frac{6122939}{627268} & 0 & \frac{1848439}{313634} &
     \frac{679447}{156817} 
\cr \frac{5986821}{627268} & \frac{2832587}{313634} & 0 & 
\frac{2140579}{313634} 
 \cr \frac{6171859}{627268} & \frac{1106918}
    {156817} & \frac{2011373}{313634} & 0 
  \cr \frac{10946183}{1254536} & 
\frac{10468579}{1254536} & \frac{9516347}{1254536} & \frac{3819747}
    {1254536} 
   \cr \frac{8632011}
    {2509072} & \frac{22730653}{2509072} & \frac{20114289}{2509072} & 
\frac{20465549}{2509072} 
\cr \frac{1042971}{156817} & \frac{7673707}{627268} &
     \frac{3936333}{627268} & \frac{6139751}{627268} 
\cr \frac{5760001}{627268} & \frac{3618817}{313634} & 
\frac{619915}{156817} & \frac{2891529}{313634} 
    \end{array}\right.
\left.\begin{array}{cccccccc} 
\frac{6844259}{627268} & \frac{68203479}
    {5018144} & \frac{1655067}{156817} & \frac{6683885}{627268} \cr 
 \frac{5987869}{627268} & 
\frac{91460059}{5018144} & \frac{9504503}{627268} & \frac{3771965}
    {313634} \cr  \frac{7480049}{627268} & \frac
      {101973699}{5018144} & \frac{7735425}{627268} & 
\frac{1188563}{156817} \cr \frac{4373337}{627268} & 
\frac{100608923}{5018144} & \frac{9680431}{627268} &
     \frac{3899619}{313634} \cr 0 & \frac{98029553}{5018144} & \frac{19278767}{1254536} 
& \frac{16338531}{1254536} \cr \frac{14514859}
    {1254536} & 0 & \frac{31874379}{2509072} & \frac{30104657}{2509072} 
\cr 
\frac{8562167}{627268} & \frac{95101143}{5018144} & 0 & \frac{3197993}
    {627268} \cr  \frac{8616549}
    {627268} & \frac{103757699}{5018144} & \frac{4722493}{627268} & 0 \cr
    \end{array}\right)$$

\subsection{Hitting times in the first sense}

Hitting times in the first sense can then be computed thanks to: ($n_k$ is the expected number of visit to $k$ in a walk from $i$ to $j$)
$$h_{ij}=\omega(i)R_{ij}\sum_{k\in V}R_{ki}$$

Indeed, the electrical potential is proportionnal to the average number of visits  $n_i$ (\cite{Lova93}) to a site $i$ in a walk from the site of potential 1 to the site of potential 0 (\cite{Boll98}), since $$n_i=\sum_{j\in\mathcal N(i)}n_j\times \frac{\omega(i)}{\omega(i, j)}$$ and the potential also solves this equation. All of these solutions being proportionnal, $n_i=n_1V_i$ if 1 is the site with potential 1. Now, the probability that the token hits 0 before going back to 1 is $\frac{2\omega(G)}{\omega(1)\kappa_{10}}$, as shown in \cite{Lova93}. The expected number of return to 1 before hitting 0 is thus:
\begin{equation*}
\begin{split}
n_1&=\sum_{k=1}^{+\infty}k\left(1-\frac{2\omega(G)}{\omega(1)\kappa_{10}}\right)^k\frac{2\omega(G)}{\omega(1)\kappa_{10}}\\
&=\sum_{k=1}^{+\infty}\sum_{l=1}^k\left(1-\frac{2\omega(G)}{\omega(1)\kappa_{10}}\right)^k\frac{2\omega(G)}{\omega(1)\kappa_{10}}\\
&=\sum_{l=1}^{+\infty}\sum_{k=l}^{+\infty}\left(1-\frac{2\omega(G)}{\omega(1)\kappa_{10}}\right)^k\frac{2\omega(G)}{\omega(1)\kappa_{10}}\\
&=\frac{2\omega(G)}{\omega(1)\kappa_{10}}\times\frac{1}{1-\left(1-\frac{2\omega(G)}{\omega(1)\kappa_{10}}\right)}\times\sum_{l=1}^{+\infty}\left(1-\frac{2\omega(G)}{\omega(1)\kappa_{10}}\right)^l\\
&=\frac{\omega(1)\kappa_{10}}{2\omega(G)}
\end{split}
\end{equation*}

Thus, $n_k=n_i\frac{V_k-V_i}{V_j-V_i}$ ($V_i=0$, $V_j=1$)
and,
\begin{equation*}
\begin{split}
h'_{ij}  & =\sum_{k\in V}n_k\\
&=n_i\sum_{k\in V}\frac{V_k-V_i}{V_j-V_i}\\
&=\frac{\omega(i)\kappa_{ij}}{2\omega(G)}\sum_{k\in V}\frac{V_k-V_i}{V_j-V_i}\\
&=\omega(i)(V_j-V_i)\sum_{k\in V}\frac{V_k-V_i}{V_j-V_i}\\
&=\omega(i)R_{ij}\sum_{k\in V}R_{ki}
\end{split}\end{equation*}

\subsection{Cyclic cover time}

The cyclic cover time is defined as:$$\min\left\{\left.\sum_{i=1}^{n}h_{i\sigma(i)}\right/\sigma\in\mathfrak S_n\right\}$$
with $\mathfrak S_n$ the cyclic group of order $n$.

The cyclic cover time is an upper bound of the cover time. It represents the average time for a walk to visit all vertices in the best deterministic order. \cite{Feig95, CoFS96} make use of the cyclic cover time to bound the cover time.

\cite{CoFS96} computes the cyclic cover time thanks to a travelling salesperson formulation, which includes a prior computation of all the hitting times. Our algorithm can time-quiclky speed up the first phase of this computation.

\section{Computation of the Cover Time}

The cover time is the expected time for a random walk starting from a given node to visit all the nodes in  the graph. In terms of random walk based distributed algorithm, this is the time required to broadcast a piece of information to all computers taking part in the process. In the algorithm in \cite{Aldo90} that  builds a spanning tree, the cover time is the average time after which the algorithm has built a spanning tree (note that some fault-tolerant algorithms are based on it, the stabilization time is the cover time of the graph).

In this  section, we first reformulate the problem in terms of hitting times on a graph $\mathcal G$, then give an algorithm providing the cover time. This algorithm is improved in the next subsection and we conclude by providing an example of this.

To compute the cover time, we need a criterion to determine whether every vertex has been visited by the token. Consider  $G=(V,E)$ the undirected connected graph modeling a distributed system. We build from $G$ an associated graph $\mathcal G$  so that the cover time of $G$ can be expressed in terms of hitting times in $\mathcal G$. To express results on cover time using hitting times, we have to take into account the token trajectory. So $\mathcal G$ should reflect some history-dependant data. 

In this section, we limit the reasoning to unweighted graph to avoid big equations meaninglessly. Nevertheless, all of them hold with weighted graphs.

\subsection{Construction of the associated graph $\mathcal{G}$}

First let define $\mathcal G =(\mathcal V, \mathcal E) $ where $\mathcal{V}$ is a set of nodes and $\mathcal{E}$ a set of directed edges.  

\begin{itemize}
\item $x \in \mathcal{V}$ is defined by $x = (P,i)$ with $P \in \mathcal{P}(V)$ where $\mathcal{P}(V)$ is the power set of V (set of nodes of $G$) and $i\in V$. $P$ represents the set of nodes in $G$ already visited by the token,  and $i\in V$ the vertex on which the token is currently on.  

\item any edge $(x,y) \in \mathcal{E} $ is of the form $(x,y) = ((P,i), (Q,j))$ with $(x,y) \in \mathcal{V} \times \mathcal{V}$ and $(i,j) \in E$.
\end{itemize}

Suppose that, initially,  the token is at node $i$ in $G$, and next the token moves to $j$ neighbor of $i$, and then next moves back to $i$. In the associated graph $\mathcal{G}$ , we have the following path
$
\left(  (\{  i \}, i) ; (\{  i ,j\}, j) ; (\{  i, j \}, i) \right )$.

Note that $\mathcal{E}$ is a set of  \emph{directed} edges $((P,i), (Q,j))$. Edges in $\mathcal{E}$ are defined by:

\begin{itemize}
\item $((P,i), (P,j))$, where $i\in P$ and $j\in P$ are neighbors; this case corresponds to a token transmission to the node $j$ which has already been visited by the token.

\item $((P,i), (P\bigcup\{j\},j))$ where $i\in P$ and $j\notin P$ are neighbors; this case corresponds to a token transmission to the node $j$ which is holding the token for the first time.
\end{itemize}

The probability to obtain a given path in $G$ is equal to the probability to obtain the associated path in  $\mathcal G$. Indeed, for $i \in P\subset V$ and $j\in V$, there exists some $Q \subset V$ such that the transition probability from $(P,i)$ to $(Q,j)$ and the transition probability from $i$ to $j$ are equal: $Q=P$ if $j\in P$, else, $Q=P\bigcup\{j\}$.

A token in $G$ has visited every node \emph{iff} the associated token in $\mathcal G$ has reached a node $(P,i)$ such that $P=V$. Then, we deduce that the cover time in $G$ is the average time it takes to a  token in $\mathcal{G}$ starting from a node $i$ to reach any arbitrary node $k$ for the first time while having visited all nodes, that is

$$C_i(G)=h_{(\{i\},i), \{(V,k)/k\in V\}}(\mathcal G)$$ 

The token has covered $G$ when the associated token in $\mathcal{G}$ has hit any vertex
 in $F=\{(V,k)/k\in V\}$. We do not care at which node $(V,k)$ the token reaches in $\mathcal{G}$, then we lump all nodes in $F$ into a single node called $f$ (in fact we obtain an absorbing Markov Chain). Now, the cover time in $G$ is obtained by the average  number of steps needed before entering $f$ starting in node $(\{ i\}, i)$.

\subsection{Cover time computation}

$\mathcal G$ being directed, we cannot apply the procedure in section \ref{hitting} to compute $h_{(\{i\},i), \{(V,k)/k\in V\}}(\mathcal G)$. 

Let $\mathcal N_o(x)$ be the set of vertices that have an incoming edge from $x$: $\{y\in \mathcal V/(x,y)\in \mathcal E\}$.

Since $f$ can be reached from any vertex (if not, some of the $h_{xf}$ would be undefined) we have, \begin{equation}\label{syslinh}\left\{\begin{array}{l}\forall x\in V, h_{xf}=1+\sum_{y\in\mathcal N_o(x)}p_{xy}h_{yf}\\ h_{ff}=0\end{array}\right.\end{equation}

The square linear system  (\ref{syslinh})  has a single solution (vector $h_{. f}$) then the hitting time between all nodes and a given node can be computed by inverting one matrix. 
 
Thus,  the cover time of any graph $G$ is computed by building $\mathcal G$ and by computing $h_{(i,\{i\}), f}(\mathcal G)$, which requires the inversion of an approximatively $n2^n\times n2^n$ matrix.

Let $G$ be the graph on figure \ref{gr4}. Then $\mathcal G$ is partially represented by the graph on figure \ref{grC}.

\begin{figure}
\begin{center}
\begin{minipage}[b]{.3\linewidth}
\includegraphics[width=.9\linewidth]{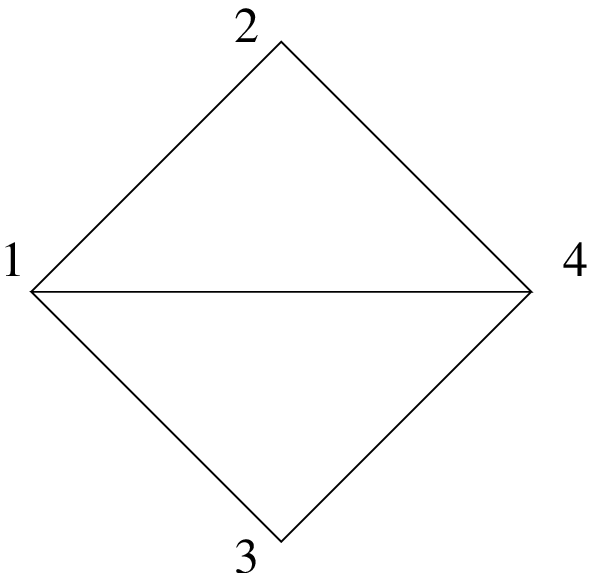}
\caption{$G$\label{gr4}}
\end{minipage}
\begin{minipage}[b]{.45\linewidth}
\includegraphics[angle=270, width=.9\linewidth]{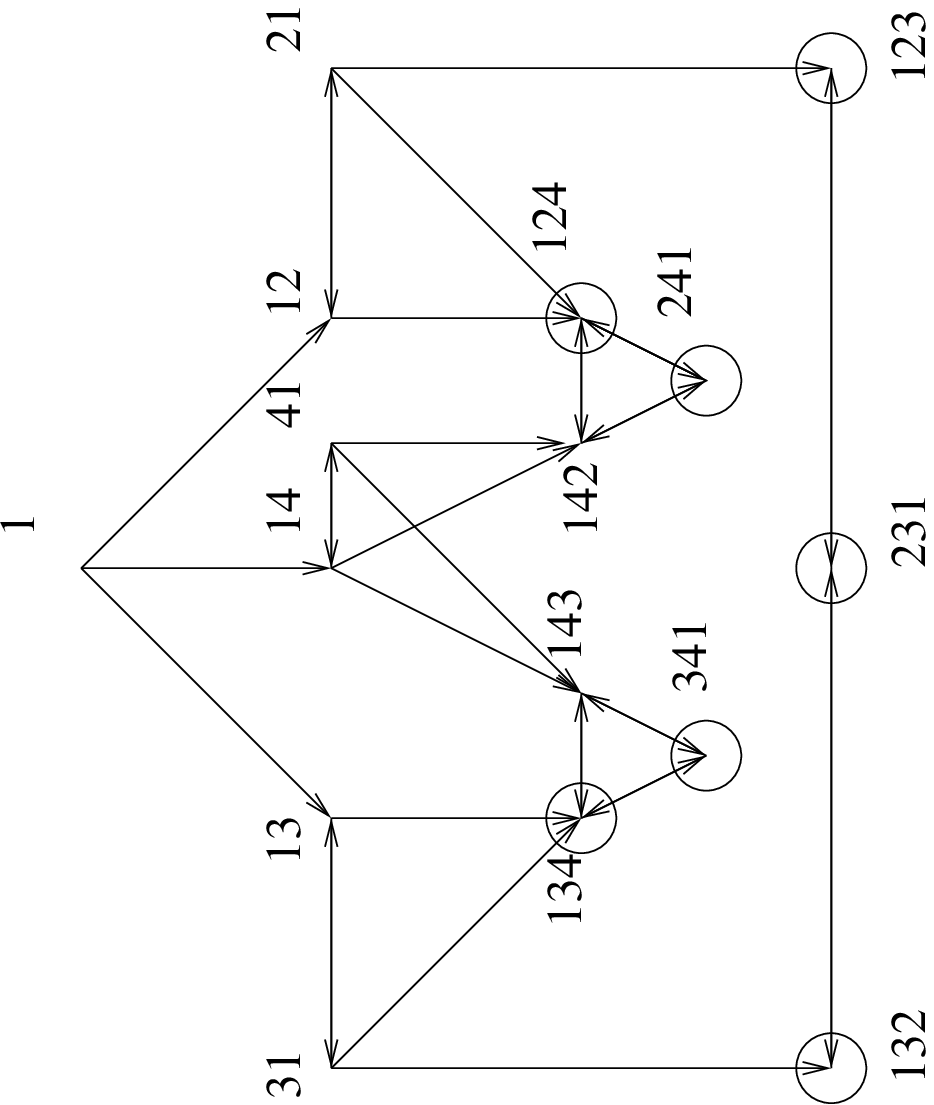}
\caption{$\mathcal G$\label{grC}}
\end{minipage}
\end{center}
\end{figure}

In figure \ref{grC}, we use the following notation: $ij\textbf{k}$ corresponds to node $(\{i,j,\mathbf{k}\}, \mathbf{k})$ (\emph{e.g.} 31 corresponds to $(\{ 1,3\}, 1)$ and 13 corresponds to $(\{ 1,3\}, 3)$).
 We only built the part of $\mathcal G$ that corresponds to situations where the token started in node 1.  We did not write the states in which all vertices are visited: for the sake of legibility, we circled the sites that lead to such a state. Thus, in state 134, the token will reach 2 and achieve to cover the graph with probability $\frac13$, reach 3 (the state being 143) or 1 (341) also with probability $\frac13$.

Since we merge all the states in which the token has covered the graph, every circled state leads to the new site $f$ with a directed vertex. We did not write unreachable sites.

The matrix of $\mathcal G$ is then:

{\small $$\left(\begin{array}{r|c c c c c c c c c c c c c c c c c}
&1 & 12 & 14 & 13 & 21 & 124 & 41 & 142 & 143 & 31 & 134 & 123 & 132 & 241 & 341 & 231 & f \cr
\hline
1&   0 & 1 & 1 & 1 & 0 & 0 & 0 & 0 & 0 & 0 & 0 & 0 & 0 & 0 & 0 & 0 & 0 \cr
12&   0 & 0 & 0 & 0 & 1 & 1 & 0 & 0 & 0 & 0 & 0 & 0 & 0 & 0 & 0 & 0 & 0 \cr
14&   0 & 0 & 0 & 0 & 0 & 0 & 1 & 1 & 1 & 0 & 0 & 0 & 0 & 0 & 0 & 0 & 0 \cr
13&   0 & 0 & 0 & 0 & 0 & 0 & 0 & 0 & 0 & 1 & 1 & 0 & 0 & 0 & 0 & 0 & 0 \cr 
21&   0 & 1 & 0 & 0 & 0 & 1 & 0 & 0 & 0 & 0 & 0 & 1 & 0 & 0 & 0 & 0 & 0 \cr 
124&   0 & 0 & 0 & 0 & 0 & 0 & 0 & 1 & 0 & 0 & 0 & 0 & 0 & 1 & 0 & 0 & 1 \cr 
41&   0 & 0 & 1 & 0 & 0 & 0 & 0 & 1 & 1 & 0 & 0 & 0 & 0 & 0 & 0 & 0 & 0 \cr 
142&   0 & 0 & 0 & 0 & 0 & 1 & 0 & 0 & 0 & 0 & 0 & 0 & 0 & 1 & 0 & 0 & 0 \cr 
143&   0 & 0 & 0 & 0 & 0 & 0 & 0 & 0 & 0 & 0 & 1 & 0 & 0 & 0 & 1 & 0 & 0 \cr 
31&   0 & 0 & 0 & 1 & 0 & 0 & 0 & 0 & 0 & 0 & 1 & 0 & 1 & 0 & 0 & 0 & 0 \cr 
134&   0 & 0 & 0 & 0 & 0 & 0 & 0 & 0 & 1 & 0 & 0 & 0 & 0 & 0 & 1 & 0 & 1 \cr 
123&   0 & 0 & 0 & 0 & 0 & 0 & 0 & 0 & 0 & 0 & 0 & 0 & 0 & 0 & 0 & 1 & 1 \cr 
132&   0 & 0 & 0 & 0 & 0 & 0 & 0 & 0 & 0 & 0 & 0 & 0 & 0 & 0 & 0 & 1 & 1 \cr 
241&   0 & 0 & 0 & 0 & 0 & 1 & 0 & 1 & 0 & 0 & 0 & 0 & 0 & 0 & 0 & 0 & 1 \cr 
341&   0 & 0 & 0 & 0 & 0 & 0 & 0 & 0 & 1 & 0 & 1 & 0 & 0 & 0 & 0 & 0 & 1 \cr 
231&   0 & 0 & 0 & 0 & 0 & 0 & 0 & 0 & 0 & 0 & 0 & 1 & 1 & 0 & 0 & 0 & 1 \cr 
f&   0 & 0 & 0 & 0 & 0 & 0 & 0 & 0 & 0 & 0 & 0 & 0 & 0 & 0 & 0 & 0 & 0 \end{array}\right)
$$

}
The system that we have to solve to obtain the cover time is:
{\small $$
  \setlength{\extrarowheight}{3 pt}\left(\begin{array}{c c c c c c c c c c c c c c c c c}
1 & - \frac{1}{3}   & - \frac{1}{3}   & - \frac{1}{3} 
      & 0 & 0 & 0 & 0 & 0 & 0 & 0 & 0 & 0 & 0 & 0 & 0 & 0 \cr 0 & 1 & 0 & 0 & - \frac{1}{2}   & - \frac{1}
     {2}   & 0 & 0 & 0 & 0 & 0 & 0 & 0 & 0 & 0 & 0 & 0 \cr 0 & 0 & 1 & 0 & 0 & 0 & - \frac{1}{3}   & -
     \frac{1}{3}   & - \frac{1}{3} 
      & 0 & 0 & 0 & 0 & 0 & 0 & 0 & 0 \cr 0 & 0 & 0 & 1 & 0 & 0 & 0 & 0 & 0 & - \frac{1}{2}   & - \frac{1}
     {2}   & 0 & 0 & 0 & 0 & 0 & 0 \cr 0 & - \frac{1}{3}   & 0 & 0 & 1 & - \frac{1}{3} 
      & 0 & 0 & 0 & 0 & 0 & - \frac{1}{3}   & 0 & 0 & 0 & 0 & 0 \cr 0 & 0 & 0 & 0 & 0 & 1 & 0 & - \frac{1}
     {3}   & 0 & 0 & 0 & 0 & 0 & - \frac{1}{3}   & 0 & 0 & - \frac{1}{3}   \cr 0 & 0 & -
     \frac{1}{3}   & 0 & 0 & 0 & 1 & - \frac{1}{3}   & - \frac{1}{3} 
      & 0 & 0 & 0 & 0 & 0 & 0 & 0 & 0 \cr 0 & 0 & 0 & 0 & 0 & - \frac{1}{2}   & 0 & 1 & 0 & 0 & 0 & 0 & 0 & -
     \frac{1}{2}   & 0 & 0 & 0 \cr 0 & 0 & 0 & 0 & 0 & 0 & 0 & 0 & 1 & 0 & - \frac{1}{2}   & 0 & 0 & 0 & 
    - \frac{1}{2}   & 0 & 0 \cr 0 & 0 & 0 & - \frac{1}{3}   & 0 & 0 & 0 & 0 & 0 & 1 & - \frac{1}
     {3}   & 0 & - \frac{1}{3}   & 0 & 0 & 0 & 0 \cr 0 & 0 & 0 & 0 & 0 & 0 & 0 & 0 & - \frac{1}{3}
       & 0 & 1 & 0 & 0 & 0 & - \frac{1}{3}   & 0 & - \frac{1}{3} 
      \cr 0 & 0 & 0 & 0 & 0 & 0 & 0 & 0 & 0 & 0 & 0 & 1 & 0 & 0 & 0 & - \frac{1}{2}   & - \frac{1}{2} 
      \cr 0 & 0 & 0 & 0 & 0 & 0 & 0 & 0 & 0 & 0 & 0 & 0 & 1 & 0 & 0 & - \frac{1}{2}   & - \frac{1}{2} 
      \cr 0 & 0 & 0 & 0 & 0 & - \frac{1}{3}   & 0 & - \frac{1}{3} 
      & 0 & 0 & 0 & 0 & 0 & 1 & 0 & 0 & - \frac{1}{3}   \cr 0 & 0 & 0 & 0 & 0 & 0 & 0 & 0 & - \frac{1}{3}
       & 0 & - \frac{1}{3}   & 0 & 0 & 0 & 1 & 0 & - \frac{1}{3} 
      \cr 0 & 0 & 0 & 0 & 0 & 0 & 0 & 0 & 0 & 0 & 0 & - \frac{1}{3}   & - \frac{1}{3} 
      & 0 & 0 & 1 & - \frac{1}{3}   \cr 0 & 0 & 0 & 0 & 0 & 0 & 0 & 0 & 0 & 0 & 0 & 0 & 0 & 0 & 0 & 0 & 1
      \end{array}\right) h_{.f} (\mathcal G)= \left(\begin{array}{c}1\\1\\1\\1\\1\\1\\1\\1\\1\\1\\1\\1\\1\\1\\1\\1\\0\end{array}\right)$$
}
When solving this system, we obtain that 

$h_{.f}(\mathcal G)=\left(\frac{34}{5};\frac{109}{20};\frac{13}{2};\frac{109}{20};\frac{49}{10};4;\frac{13}{2};5;5\frac{49}{10};4;\frac{9}{4};\frac{9}{4};4;4;\frac{5}{2};0\right)$

Thus, $C_1=\frac{34}5$

\subsection{Efficient cover time computation}

The matrix to be inverted in the previous method is large (about $n2^n\times n2^n$: we only provide an upper bound since the graph size can be reduced by suppressing the unreachable states), leading to a complexity approaching $n^38^n$.

However, this graph has some particularities that we want to exploit in order to improve the efficiency of the computation. The subgraphs constituted by all the vertices $(P, i)$ with the same $P$ are undirected. The time it takes to reach $f$ from $i$ when $P$ is the set of already visited vertices in $\mathcal G$ can be decomposed from  the time it takes  to reach the first vertex $j$ out of $P$ plus the expected time from $(j, P\cup\{j\})$ to $f$ (the expectation being computed over all possible $j$ wrt their probabilities of being the first hitten vertex outside of $P$). Thus, the cover time can be computed according to:

$$h_{(P, i), f}=1+s(P, i)+\sum_j p(P,i, j) h_{(P\cup\{j\}, j), f}$$

where
\begin{itemize}
\item $p(P,i, j)$ is the probability that the first vertex outside $P$ hitten by a random walk starting at $i$ is $j$;
\item $s(P, i)$ the average time the walk starting at $i$ stays in $P$
\end{itemize}

$1+s(P, i)$ is the expected time the walk will spend in the strongly connected component defined by $P$, when it is on $i$.
The next newly visited site is $j$ with probability $p(P,i, j)$, and once on this site, the walk will take an expected time of $h_{(P\cup\{j\}, j), f}$ to achieve the coverture. 

Thus, the equation above can be decomposed in $1+s(P, i)$ which represents the time spent in a strongly connected component and $\sum_j p(P,i, j) h_{(P\cup\{j\}, j), f}$ which represents the expected time to reach $f$ in the directed acyclic graph of strongly connected components.

We can express both of those quantities in terms of equivalent resistances and potentials, making it possible to use results from the previous section: 
for any $i$ in $P$ ($i$ represents the current location of the token) and $j$ in $V\backslash P$ ($j$ represents the first site the token will reach outside $P$)

\begin{itemize}
\item $p(P,i, j)$ is the potential in $i$ when $V_j=1$ and all other sites in $V\backslash P$ have potential 0
\item $s(P, i)=h_{i(V\backslash P)}(G(P\cup\{j\}))$
\end{itemize}
Those quantities can be computed thanks to a $|P|\times |P|$ matrix inversion.

Indeed, we have already remarked that the potential on one node $c$, when a given node $a$ has potential 0 and another $b$ has potential 1 is the probability to hit $b$ before $a$ when the current node is $c$. Thus, the potential on $i$ when $V_j=1$ and all other sites in $V\backslash P$ have potential 0 is the probability that the next newly visited vertex is $j$.

$s(P, i)$ is the average time the token spends in $P$, since it is the expected time to reach a node in $V\backslash P$.

In fig.\ref{gr5},  we represented $\cal G$ and circled 
subgraphs that are not directed. Each of them is also a subgraph 
(connected and containing 1) of $G$. We have to compute the time the 
random walk spends in each of the subgraphs, considering its arrival 
point. In fig \ref{gr6}, we highlighted the directed edges joining the various 
subgraphs: each of them represents the discovery of a new vertex. We 
have to compute the probability that the walk crosses each of these 
edges, depending on the vertex of the subgraph it arrives on. Then, 
using those information, we can compute the cover time with the above 
formula.

\begin{figure}
\begin{center}
\begin{minipage}[b]{.45\linewidth}
\includegraphics[angle=270, width=.9\linewidth]{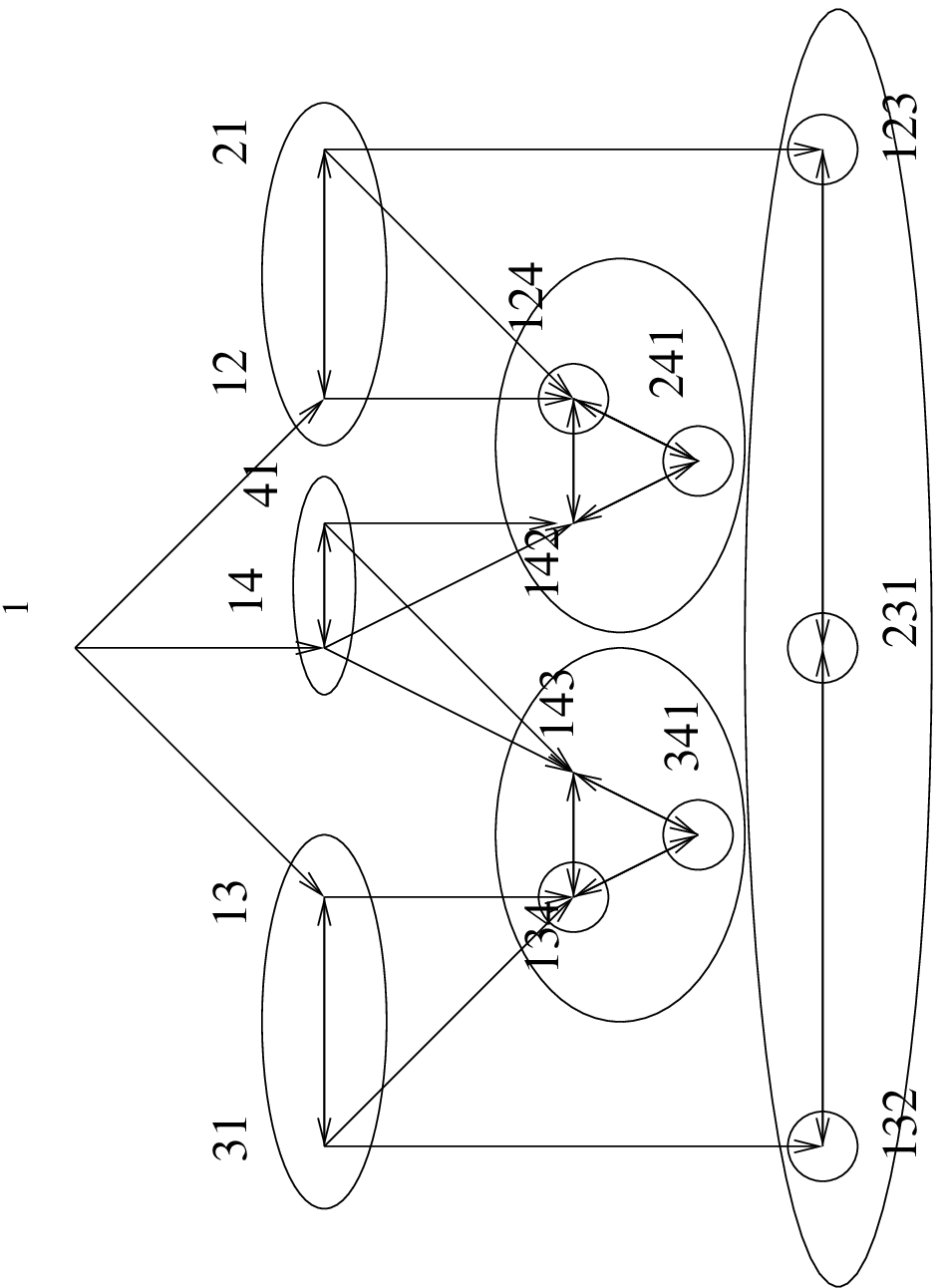}
\caption{$G$\label{gr5}}
\end{minipage}
\begin{minipage}[b]{.45\linewidth}
\includegraphics[angle=270, width=.9\linewidth]{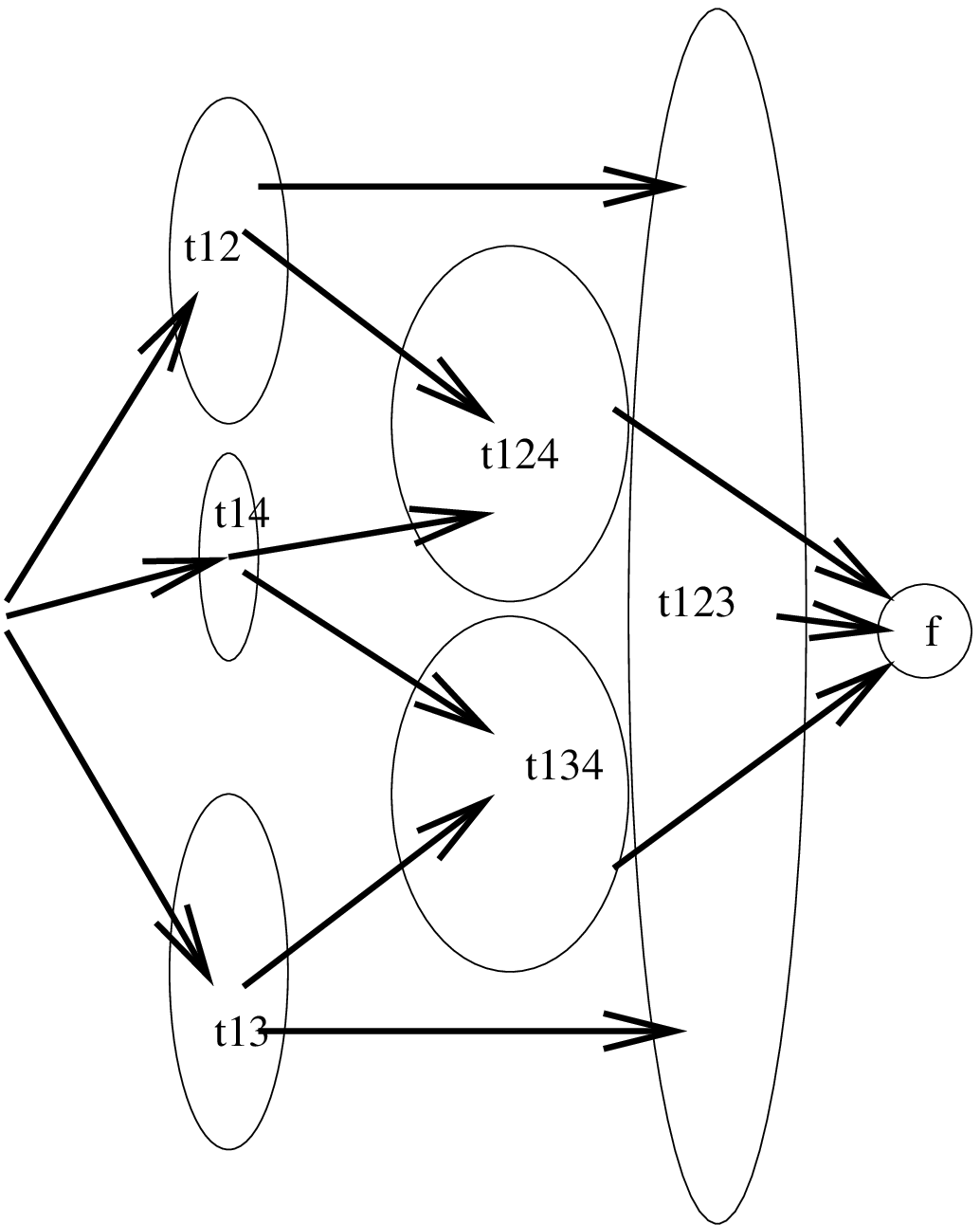}
\caption{$\mathcal G$\label{gr6}}
\end{minipage}
\end{center}
\end{figure}

The complexity of this procedure is one $k\times k$ matrix inversion for each subgraph of size $k$ appearing. The complexity is then at most $O(n^32^n)$. However, it highly depends on the topology of $G$. If $G$ is a chain, only $n$ subgraphs appear (a subgraph occurs in the computation \emph{iff} it contains the state $1$ and is connected), and the complexity is $O(n^4)$.

This rather expensive computation is robust to the topological evolution of the graph, thanks to the Rayleigh's shortcut principle exposed above. The cover time computation being based on a hitting time computation, adding or removing a limited number of edges in the graph do not modify the cover time by more than the ratio of the weights of modified edges to the global weight of the graph.

\section{Conclusion}

Random walk based algorithms represent an important class of distributed algorithms, two of their main features  are that they require no assumptions on the topology of the network and that they can easily handle topological changes without any special procedure triggered by a change. 
The exact computation of hitting and cover times allows the computation of the complexity of these algorithms.

Further research can be conducted based on the exact computation of the hitting and cover times. These results are more clearer than previous results which were approximation.  
We plan to overview the hitting and cover times over various topologies, ranging from classical topologies, like hypercubes or tori, to topologies modeling the actual high-scale distributed systems, like small-world graphs, some categories of random graphs and maps of parts of peer-to-peer file-sharing networks. We hope this work will provide tracks on the topologies to consider in order to achieve a good behavior of the walk and on the impact of a slight difference between the actual topology of a network and the intended topology.

Thus, the hitting and cover times allow  us determine the complexity of a wide class of algorithms, but we can also improve them by choosing the topologies in which they are efficient.

\bibliographystyle{fundam}
\bibliography{./biblio}

\appendix

\section*{Appendices}

\subsection*{The Cover Time of the Complete Graph}

Let $G=(V,E)$ be the complete graph on $n$ vertices. Let $C$ be the average time a random walk on $G$ takes to visit every vertex in $G$ (the \emph{cover time}; note that the starting vertex does not matter here, since the graph is symmetric). Let $C_k$ be the average time the random walk takes to visit the $k+1$-st vertex when it has visited the $k$-th one.

Then,\[C=\sum_{k=1}^{n-1}C_k\]

When the walk has visited $k$ vertices, at the next step, it has $\frac{n-k}{n-1}$chance to visit a new vertex, and $\frac{k-1}{n-1}$ chance to visit an already known one. Thus, the expected time to visit a new vertex, when $k$ vertices have already been visited is:
\begin{align*}
C_k &= \sum_{i\in\mathbb N}i\left(\frac{k-1}{n-1}\right)^{i-1}\frac{n-k}{n-1}
= \sum_{i\in \mathbb N}\sum_{j=1}^{i}\left(\frac{k-1}{n-1}\right)^{i-1} \frac{n-k}{n-1}\\
&= \sum_{j\in\mathbb N^*}\left(\sum_{i\geq j}\left(\frac{k-1}{n-1}\right)^{i-1}\right)\frac{n-k}{n-1}
= \sum_{j\in\mathbb N^*}\left(\frac{k-1}{n-1}\right)^{j-1}\frac1{1-\frac{k-1}{n-1}}\frac{n-k}{n-1}\\
&= \left(\frac1{1-\frac{k-1}{n-1}}\right)^2\frac{n-k}{n-1}
= \left(\frac{n-1}{n-k}\right)^2\frac{n-k}{n-1}\\
&= \frac{n-1}{n-k}
\end{align*}

Then:
\begin{align*}
C&=\sum_{k=1}^{n-1}C_k
=\sum_{k=1}^{n-1}\frac{n-1}{n-k}
=(n-1)\sum_{k=1}^{n-1}\frac1{n-k}
=(n-1)\sum_{i=1}^{n-1}\frac1i\\
&=(n-1)H_{n-1}
\end{align*}
with $H_n$ the $n$-th harmonic number.

Thus $$C\sim_{n\rightarrow \infty} n\log n$$

\end{document}